\documentclass[sigconf, authorversion]{acmart}
\AtBeginDocument{%
  }

\copyrightyear{2026}
\acmYear{2026}
\setcopyright{cc}
\setcctype{by}
\acmConference[CCS '26] {Proceedings of the 2026 ACM SIGSAC Conference on Computer and Communications Security}{November 15--19, 2026}{The Hague, Netherlands.}
\acmBooktitle{Proceedings of the 2026 ACM SIGSAC Conference on Computer and Communications Security (CCS '26), November 15--19, 2026, The Hague, Netherlands}
\acmISBN{979-8-4007-2871-6/2026/11}
\acmDOI{10.1145/3830454.3846666}

\usepackage{amsmath,amsthm,amsfonts,bm,bbm}
\usepackage{multirow}
\usepackage{longtable}
\usepackage{mdframed}
\usepackage{algorithm}

\usepackage{booktabs}
\usepackage{tabularx}
\usepackage{makecell}
\usepackage{tablefootnote}

\usepackage{soul}
\soulregister\ref7
\soulregister\pageref7
\soulregister\eqref7
\soulregister\cite7
\soulregister\autoref7
\soulregister\cref7

\usepackage[noend]{algpseudocode}

\makeatletter
\newcommand*{\rom}[1]{\expandafter\@slowromancap\romannumeral #1@}

\newtheorem{theorem}{Theorem}

\newtheorem{remark}{Remark}

\newtheorem{proposition}{Proposition}
\newtheorem{definition}{Definition}
\renewcommand{\v}[1]{\ensuremath{\v{#1}}}

\def \v x{\bm x}

\def \v x{\bm X}

\renewcommand{\v}[1]{\ensuremath{\boldsymbol{#1}}}

\definecolor{matchgreen}{RGB}{34, 139, 34} 
\definecolor{diffred}{RGB}{220, 20, 60}    

\newcommand{\red}[1]{{\color{black}#1}}

\begin{document}

\title{When Topology Betrays Privacy: Lattice-Based Reconstruction\\ Attacks on Secure Aggregation in Decentralized Federated Learning}



\author{Wenrui Yu}
\affiliation{%
  \institution{Aalborg University}
  \city{Copenhagen}
  \country{Denmark}
}
\email{wenyu@es.aau.dk}

\author{Changlong Ji}
\affiliation{%
  \institution{Institut Polytechnique de Paris}
  \city{Paris}
  \country{France}}
\email{changlong.ji98@gmail.com}

\author{Johannes Bjerva}
\affiliation{%
  \institution{Aalborg University}
  \city{Copenhagen}
  \country{Denmark}
}
\email{jbjerva@cs.aau.dk}

\author{Qiongxiu Li}
\authornote{Corresponding author.}
\affiliation{%
  \institution{Aalborg University}
  \city{Copenhagen}
  \country{Denmark}
}
\email{qili@es.aau.dk}

\renewcommand{\shortauthors}{Wenrui Yu, Changlong Ji, Johannes Bjerva, and Qiongxiu Li}

\begin{abstract}

Secure Aggregation (SA) is widely regarded as a strong defense against model-update leakage in Federated Learning (FL), as it reveals only aggregate results while hiding individual updates. In Decentralized Federated Learning (DFL), SA is commonly instantiated as local neighborhood aggregation, where each node obtains a weighted aggregate over its neighbors. We show that this locality creates a structural leakage surface: sparse decentralized topologies provide colluding semi-honest nodes with asymmetric aggregate views, exposing multiple hidden linear combinations of honest participants’ private states. Reconstructing private states from these aggregate views is fundamentally challenging, as both the private states and the aggregation coefficients are hidden. We tackle this challenge by establishing a formal connection to the Hidden Subset Sum Problem, a long-studied problem in cryptography. Building on this formulation, we design a lattice-based reconstruction approach that combines lattice reduction with structural filtering to reconstruct protected model states. We evaluate our attack on image, tabular, and text tasks under sparse DFL topologies. Our results show that colluding semi-honest nodes can recover the original local updates of honest nodes, enabling downstream reconstruction of private training data. These findings demonstrate that SA alone does not guarantee privacy in DFL when local aggregation induces asymmetric observations.
\end{abstract}

\begin{CCSXML}
<ccs2012>
   <concept>
       <concept_id>10002978.10002979.10002983</concept_id>
       <concept_desc>Security and privacy~Cryptanalysis and other attacks</concept_desc>
       <concept_significance>500</concept_significance>
       </concept>
   <concept>
       <concept_id>10002978.10002991.10002995</concept_id>
       <concept_desc>Security and privacy~Privacy-preserving protocols</concept_desc>
       <concept_significance>500</concept_significance>
       </concept>
   <concept>
       <concept_id>10010147.10010919.10010172</concept_id>
       <concept_desc>Computing methodologies~Distributed algorithms</concept_desc>
       <concept_significance>300</concept_significance>
       </concept>
   <concept>
       <concept_id>10010147.10010178.10010219</concept_id>
       <concept_desc>Computing methodologies~Distributed artificial intelligence</concept_desc>
       <concept_significance>300</concept_significance>
       </concept>
 </ccs2012>
\end{CCSXML}

\ccsdesc[500]{Security and privacy~Cryptanalysis and other attacks}
\ccsdesc[500]{Security and privacy~Privacy-preserving protocols}
\ccsdesc[300]{Computing methodologies~Distributed algorithms}
\ccsdesc[300]{Computing methodologies~Distributed artificial intelligence}

\keywords{Lattice, Federated Learning, Distributed Computing, Secure Aggregation} 


\maketitle

\section{Introduction}

Federated Learning (FL) has gained significant attention as a paradigm for collaborative machine learning. Beyond centralized architectures, Decentralized Federated Learning (DFL) enables peer-to-peer collaboration without relying on a central coordinator, thereby avoiding vulnerabilities such as a single point of failure and communication bottlenecks at the server side.

Despite the privacy-centric design of FL, a growing body of literature, primarily focusing on centralized settings, has demonstrated that FL is not a panacea for data security and privacy. Various reconstruction attacks~\cite{zhu2019deep,geiping2020inverting,xu2022agic} have proven that private training data can be leaked through shared model updates. In the context of DFL, the privacy landscape is even more complex. While some studies suggest that DFL does not inherently offer superior security compared to centralized FL (CFL)~\cite{pasquini2023security}, others argue that the peer-to-peer nature of DFL provides a privacy advantage via an information-theoretical perspective~\cite{ji2025re,yu2024provable}.

Regardless of the underlying architecture (CFL or DFL), Secure Aggregation (SA) has emerged as the foundational defense mechanism to bolster privacy~\cite{bonawitz2017practical, bell2020secure, so2022lightsecagg}. The prevailing intuition is that as long as a sufficient number of nodes are aggregated, individual contributions remain obscured within the collective sum. However, even in the more mature field of CFL, research shows that SA can be compromised by malicious servers or through specific partial participation patterns~\cite{pasquini2022eluding, so2023securing}.
The security guarantees of SA become even more tenuous when moved to a decentralized setting, where they lack rigorous analytical support. Unlike the star topology of CFL, DFL relies on localized, neighborhood-based aggregation governed by complex graph structures. Indeed, recent literature has begun to expose the shortcomings of existing DFL security proofs; for instance, Dekker et al. \cite{dekker2025topology} point out the fallacious assumption in many works that "if a single summation is secure, the protocol remains secure across multiple training iterations."
However, beyond these temporal vulnerabilities, DFL suffers from a more immediate structural vulnerability rooted in the network topology itself. The heterogeneous connectivity in DFL inherently creates uneven observational perspectives for colluding adversaries. This mismatch between the idealized algebraic symmetry expected by SA and the structural reality of DFL raises a critical question: \emph{Can topology-induced observational asymmetry turn secure neighborhood aggregates into a solvable inverse problem?}

Although recent works have hinted at potential vulnerabilities in DFL-SA, current discussions remain limited to relatively restricted cases. An example is provided by \cite{pasquini2023security}, which demonstrates that model updates can be recovered if an adversary can intercept two specific aggregated values: one containing the honest node’s contribution and another that does not. We categorize this straightforward differencing as a \textit{trivial attack} (see \eqref{eq.trivial_attack}). As the authors themselves acknowledged, a comprehensive "topology-aware" analysis is still lacking. Indeed, the security boundaries of SA in DFL have remained blurry for a long time; while the research community suspects inherent risks, there has been no systematic, generalized analysis to explain the precise conditions under which the privacy of SA can be systematically breached.

In this paper, we provide a systematic study of this problem.
By reformulating the DFL-SA process into a unified mathematical framework, we demonstrate that the privacy of honest nodes can be compromised through structural and algebraic analysis. Our main contributions are as follows:
\begin{itemize}
    \item \textbf{Exploiting Structural and Observational Asymmetry:} We reveal that the intrinsic sparsity of DFL networks leads to the observational asymmetry for corrupted nodes. We demonstrate that this structural heterogeneity allows colluding attackers to triangulate and expose the private features of honest nodes.
    \item \textbf{Unified Formulation via Lattice Challenges:} We establish a formal connection between decentralized aggregation and the Hidden Subset Sum Problem (HSSP) - a well-established NP-hard challenge in lattice-based cryptography that involves recovering hidden coefficients from a weighted sum. By exploiting the inherent sparsity of DFL topologies, we reveal that neighborhood-based SA can be reduced to a multi-dimensional HSSP (or its generalization, the Hidden Linear Combination Problem). This unified formulation allows us to apply lattice basis reduction techniques to systematically deconstruct SA, enabling the reconstruction of private model updates in the first round, or for intermediate states in later rounds.
    \item \textbf{Extensive Empirical Validation:} We validate our attack framework across diverse data modalities, including image, tabular, and text datasets. Our results demonstrate that the proposed framework successfully disentangles individual honest updates from the aggregated sum. We show that once these private model updates are recovered, they can be used as inputs to downstream gradient inversion attacks to reconstruct raw training samples, suggesting that current SA implementations in DFL may not provide additional privacy.
\end{itemize}

The remainder of this paper is structured as follows. Section~\ref{sec.prelim} establishes the fundamental concepts required and Section~\ref{ssec.threat} presents the threat model. In Section~\ref{sec.formulation}, we present the problem formulation, in which the privacy inference task is cast into a unified matrix representation and mapped to the HSSP framework. Section~\ref{sec.hssp} introduces the core lattice-based attack framework. This is followed by the optimized methodology detailed in Section~\ref{sec.filter}. Section~\ref{sec.exp} provides extensive numerical results to validate the efficacy of our proposed framework. Section~\ref{sec.related} introduces the related work. Section~\ref{sec:scope_limitations} discusses the limitations of our analysis. Finally, Section~\ref{sec.conclusion} concludes the paper and discusses potential directions for future work.

\section{Preliminaries}\label{sec.prelim}

\subsection{Notation}

Throughout this paper, we adopt the following notation: scalars are denoted by lowercase letters ($x$), vectors by lowercase boldface letters ($\boldsymbol{x}$), and matrices by uppercase boldface letters ($\boldsymbol{X}$). The $i$-th entry of vector $\boldsymbol{x}$ is denoted by $x_i$, and the $(i,j)$-th entry of matrix $\boldsymbol{X}$ is denoted by $X_{ij}$. We use $(\cdot)^{\top}$ to represent the transpose and $|\cdot|$ for the cardinality of a set. The symbols $\boldsymbol{0}$, $\boldsymbol{1}$, and $\boldsymbol{I}$ refer to the all-zero vector, the all-one vector, and the identity matrix of appropriate dimensions, respectively. Additionally, the superscript $(\cdot)^{(t)}$ indicates the value of a variable at the $t$-th iteration. The Euclidean norm is denoted by $\|\cdot\|$ and $\langle \cdot, \cdot \rangle$ denotes the standard inner product.

\subsection{Decentralized Federated Learning}
\subsubsection{Formulation}

In a standard machine learning setting, the primary objective is to find optimal model parameters $\v x \in \mathbb{R}^d$ that minimize the expected loss over a data distribution $\mathcal{D}$. This can be formulated as an \textit{Expected Risk Minimization} problem
\begin{align*}    
      &\min_{\v x}f(\v x):=\mathbb{E}_{\xi\sim\mathcal{D}}[\ell (\v x,\xi)],
\end{align*} 
where $f(\v{x})$ denotes the expected loss over the data distribution $\mathcal{D}$ and $\ell(\cdot, \cdot)$ denotes the loss function that quantifies the discrepancy between the model's prediction and the ground truth for a data sample $\xi$.

In the context of FL, data is not centrally stored but distributed across a set of nodes/clients $\mathcal{V} = \{1, 2, \dots, n\}$. Each node $i\in\mathcal{V}$ has access only to its local data distribution $\mathcal{D}_i$. So the global objective can be expressed as minimizing the average of the local loss functions
\begin{align*}    
    \min_{\v{x}} F(\v{x}) := \frac{1}{|\mathcal{V}|} \sum_{i \in \mathcal{V}} f_i(\v{x}), \quad \text{where } f_i(\v{x}) := \mathbb{E}_{\xi \sim \mathcal{D}_i}[\ell (\v{x}, \xi)].
\end{align*} 

Further, for DFL, nodes communicate through a peer-to-peer network without the coordination of a central server. We represent the network topology as a graph $\mathcal{G} = (\mathcal{V}, \mathcal{E})$, which can be either undirected or strongly connected directed to ensure convergence, 
where $\mathcal{V}$ denotes the set of nodes and 
$\mathcal{E} \subseteq \mathcal{V} \times \mathcal{V}$ denotes the set of communication links. A link $(i, j) \in \mathcal{E}$ indicates that there exists a directed connection from node $i$ to node $j$. 
We also define the set of in-neighbors of \(i\) as 
\(\mathcal{N}_i^{in} = \{\, j \mid (j,i) \in \mathcal{E} \,\}\), 
and the set of out-neighbors of \(i\) as 
\(\mathcal{N}_i^{out} = \{\, j \mid (i,j) \in \mathcal{E} \,\}\). The number of in-neighbors and out-neighbors of node \(i\) is represented as 
\(d_i^{in} = |\mathcal{N}_i^{in}|\text{~and~} d_i^{out} = |\mathcal{N}_i^{out}|\) respectively. For the undirected graph, we have $\mathcal{N}_i =\mathcal{N}_i^{in} =\mathcal{N}_i^{out} $ and $d_i=d_i^{in}=d_i^{out}$.

To solve the global optimization problem in a decentralized manner, each node $i$ maintains its own local copy of the model parameters $\v{x}_i$. So the DFL problem can be equivalently formulated with consensus constraints as
\begin{align*}    
      &\min_{\{\v x_i:i\in\mathcal{V}\}}\frac{1}{|\mathcal{V}|}\sum_{i\in\mathcal{V}} f_i(\v x_i),\nonumber\\
    &{\text{s.t.}} ~ \v x_i =\v x_j, ~ \forall (i,j)\in \mathcal{E},
\end{align*} 
where the constraints $\v{x}_i = \v{x}_j$ ensure that all nodes in the network converge to a common global model.

\subsubsection{Common Paradigm}\label{sssec.commonparadigm}

The prevalent paradigm in DFL is primarily based on the average aggregation mechanism~\cite{pasquini2023security}. This process can be generally decomposed into two sequential steps: 1) local update and 2) neighborhood aggregation.

In the local update step, in each iteration $t$, every node $i \in \mathcal{V}$ first performs a local computation step (e.g., gradient descent) to update its local model parameters based on its private dataset $\mathcal{D}_i$:
\begin{align}
\label{eq.local_update}
    \forall i \in \mathcal{V}: \v{x}_i^{(t+\frac{1}{2})} = \v{x}_i^{(t)} + \Delta \v{x}_i^{(t)},
\end{align}
where $\Delta \v{x}_i^{(t)}$ represents the local update increment, typically involving the local gradient $-\mu\nabla f_i(\v{x}_i^{(t)})$, where $\mu$ is the learning rate, or multiple local SGD steps. Also, following the standard practice in federated learning \cite{mcmahan2017communication}, all nodes are initialized with identical weights to ensure a consistent starting point for the optimization process, such that $\v x_1^{(0)}=\v x_2^{(0)}=\cdots=\v x_n^{(0)}$.

Following the local update, nodes transmit their parameters to their (out-)neighbors to achieve consensus. 
This communication and mixing process is governed by a weight matrix 
$\v{W}^{(t)} \in \mathbb{Q}^{n \times n} \cap [0,1]^{n \times n}$.~\footnote{We assume rational weights instead of real-valued ones, which aligns with the practical setting adopted in most existing implementations.} To respect the network topology, $\v{W}^{(t)}$ must belong to the set of matrices $\mathcal{W}$ defined by the graph's connectivity
\begin{align}\label{eq.w_set}
    \mathcal{W} = \left\{ \v{W}^{(t)} \in \mathbb{Q}^{n \times n} \mid W_{ij}^{(t)} = 0 \text{ if } (j, i) \notin \mathcal{E} \text{ and } i \neq j \right\}.
\end{align}
Then the global aggregation step can be expressed in a compact matrix form
\begin{align}
\label{eq.aggr}
    \v{X}^{(t+1)} = \v{W}^{(t+1)} \v{X}^{(t+\frac{1}{2})},
\end{align}
where $\v{X}^{(t)} = [\v{x}_1^{(t)}, \dots, \v{x}_n^{(t)}]^\top \in \mathbb{R}^{n \times d}$ concatenates the local parameters of all nodes.

In fact, the convergence of DFL algorithms heavily relies on the spectral properties of $\v{W}^{(t)}$. For undirected graphs, most existing work~\cite{lian2017can, koloskova2019decentralized, xing2020decentralized, yuan2021decefl} assumes that 1) $\v{W}^{(t)}$ is doubly stochastic (and often symmetric), i.e., $\v{W}^{(t)} \v{1} = \v{1}$ and $\v{1}^\top \v{W}^{(t)} = \v{1}^\top$.
2) $\rho\left(\v{W}^{(t)} - \frac{\v{1}\v{1}^\top}{n}\right) < 1$, where $\rho(\cdot)$ denotes the spectral radius. This condition ensures that the local models contract toward the global average at a geometric rate \cite{olshevsky2009convergence}.

For scenarios involving directed graphs, where the communication links are not necessarily bi-directional, a doubly stochastic matrix may be difficult or impossible to construct. A few recent works, such as \cite{assran2019stochastic,li2025asymmetrically}, employ the Push-Sum algorithm~\cite{kempe2003gossip} to handle such topologies. However, regardless of whether a standard consensus or a Push-Sum-based mechanism is employed, the fundamental update paradigm remains structurally consistent with the linear mixing form in \eqref{eq.aggr}, as we explain in Appendix~\ref{app.pushsum}. For the sake of brevity and clarity, we will adopt this unified representation in the subsequent sections of this paper to describe the general decentralized aggregation process.

\subsubsection{Temporal Dynamics of the Weight Matrix}

In the majority of DFL literature, the weight matrix is typically assumed to be static, meaning $\v{W}^{(t)} = \v{W}$ for all iterations $t$. From a physical perspective, this assumption implies that the network topology is invariant and that all participating nodes engage in the aggregation process in every round with constant mixing weights.
This static assumption does not always hold in practice. For instance, asynchronous updates ~\cite{dekker2025topology, jin2016scale} or time-varying topologies~\cite{li2024accelerated} can result in a dynamic weight matrix $\v{W}^{(t)}$ that evolves over time.

\subsubsection{(Approximated) Gradient Inversion Attack}\label{sssec.gia}
Gradient Inversion Attack (GIA) is a potent privacy threat in FL, where an adversary aims to reverse-engineer private local datasets by intercepting the information exchanged during training.
\red{
 Recent systematizations provide comprehensive analyses of GIA threat models and attack mechanisms~\cite{carletti2025sok,guo2025exploring}. The underlying intuition of GIA is that the shared model updates (gradients or gradient-related increments) may encode rich information about the local training samples.
}

Formally, given the local update $\Delta \v{x}_i^{(t)}$ from node $i$, the attacker attempts to reconstruct the private dataset $\mathcal{D}_i$ by solving a constrained optimization problem
\begin{align*} 
    \arg\min_{\mathcal{D}'} \text{Dist} \left( \Delta \v{x}_i^{(t)}, \hat{\Delta} \v{x}_i^{(t)}\right),
\end{align*}
where $\mathcal{D}'$ is the dummy dataset, $\hat{\Delta} \v{x}_i^{(t)}$ is the update generated by the dummy data, $\text{Dist}$ is a distance metric, such as Euclidean distance~\cite{zhu2019deep} or cosine similarity~\cite{geiping2020inverting} between the observed update ${\Delta} \v{x}_i^{(t)}$ and the update generated by the dummy data $\hat{\Delta} \v{x}_i^{(t)}$.

Existing literature has explored numerous variants of GIA, ranging from early methods targeting single-batch gradients~\cite{zhu2019deep,geiping2020inverting,zhao2020idlg} to more complex approximated attacks~\cite{xu2022agic,song2023approximate,hatamizadeh2023gradient} capable of handling multiple local steps or aggregated updates. In fact, the efficacy of any GIA variant fundamentally relies on the attacker's ability to process the obtained ${\Delta} \v{x}_i^{(t)}$. 

In this work, our focus remains on the structural exploitability of the communication protocol. Specifically, we investigate the extent to which an adversary can intercept or reconstruct $\Delta \v{x}_i^{(t)}$ when it is shielded by SA primitives, rather than focusing on the refinement of the GIA algorithm itself. 
\red{GIA is therefore used as a downstream instantiation to demonstrate the privacy consequences of exposing individual updates. For clarity, we first consider a standard single-step SGD update~\cite{geiping2020inverting},}
\begin{align*}
   \Delta \v{x}_i^{(t)} = -\mu \v{g}_i^{(t)}, \quad \text{with } \v{g}_i^{(t)} = \frac{1}{|\mathcal{B}_i^{(t)}|} \sum_{\xi \in \mathcal{B}_i^{(t)}} \nabla \ell(\v{x}_i^{(t)}, \xi),
\end{align*}
where $\mathcal{B}_i^{(t)}\subseteq\mathcal{D}_i$ is the mini-batch of private samples. So the leakage of $\Delta \v{x}_i^{(t)}$ directly exposes the batch-averaged gradient.

\subsection{Secure Aggregation in Decentralized Network}

In a decentralized setting, for each node $i$, the objective of SA is to compute the weighted sum of updates from its (in)-neighbors $\mathcal{N}_i^{in}$ without revealing the weight $W_{ij}$ and the specific vector $\v{x}_j^{(t+\frac{1}{2})}$ of any individual neighbor $j \in \mathcal{N}_i^{in}$. Following \eqref{eq.w_set} and the global aggregation representation in \eqref{eq.aggr}, the protocol aims to output the weighted result at node $i$:
\begin{align}
\label{eq:sec_agg_dfl}
    \v{x}_i^{(t+1)} = \sum_{j \in \mathcal{N}_i^{in} \cup \{i\}} W_{ij} \v{x}_j^{(t+\frac{1}{2})}.
\end{align}

Under this formulation, the decentralized SA problem can be viewed as a series of local instances of the SA in centralized FL. Consequently, established protocols from centralized FL can be migrated and adapted to the decentralized topology~\cite{pasquini2023security}. Each node $i$ effectively executes a SA protocol with its direct neighborhood, treating its (in)-neighbors as "clients" and itself as the "local server". While this framework theoretically provides a layer of privacy by "shrouding" individual updates within the weighted sum \eqref{eq:sec_agg_dfl}, we will demonstrate in the following sections that this structural dependency on the topology introduces vulnerabilities that can be exploited for data reconstruction for exact local neighborhood SA protocols that
reveal aggregate states to corrupted participants.

\subsection{Fundamentals of Hidden Subset Sum Problem}

The formal definitions of the HSSP and HLCP problem are
\begin{definition}[Hidden Subset Sum Problem~\cite{gini2022hardness}] \label{def:hssp}
    Let $Q$ be a positive integer modulus and $\{{x}_i\}_{i=1}^N$ be a set of unknown secret integers in $\mathbb{Z}_Q$. Let $\boldsymbol{\alpha}_1, \dots, \boldsymbol{\alpha}_N \in \{0, 1\}^M$ be $N$ unknown binary vectors. Given $Q$ and an observed vector $\v{h} = (h_1, \dots, h_M) \in \mathbb{Z}_Q^M$ that satisfies:
    \begin{equation*}
        \v{h} \equiv \sum_{i=1}^N x_i \boldsymbol{\alpha}_i \pmod Q, 
    \end{equation*}
    the HSSP is to recover the secret $\{x_i\}_{i=1}^N$ and the binary vectors $\{\boldsymbol{\alpha}_i\}_{i=1}^N$, up to a permutation of the indices $i \in \{1, \dots, N\}$.
\end{definition}

\begin{definition}[Hidden Linear Combination Problem (HLCP)~\cite{gini2022hardness}]
Let $Q$ be a positive integer modulus and $\{{x}_i\}_{i=1}^N$ be a set of unknown secret integers in $\mathbb{Z}_Q$. Let $\boldsymbol{\alpha}_1, \dots, \boldsymbol{\alpha}_N \in \{0, 1,\cdots, c\}^M$ be $N$ unknown vectors for a given $c\in\mathbb{N}^+$. Given $Q$ and an observed vector $\v{h} = (h_1, \dots, h_M) \in \mathbb{Z}_Q^M$ that satisfies:
\begin{equation*}
    \v{h} \equiv \sum_{i=1}^N x_i \boldsymbol{\alpha}_i \pmod Q, 
\end{equation*}
the HLCP is to recover the secret $\{x_i\}_{i=1}^N$ and the vectors $\{\boldsymbol{\alpha}_i\}_{i=1}^N$, up to a permutation of the indices $i \in \{1, \dots, N\}$.
\end{definition}

We further extend the HSSP and HLCP to the case where the secrets are multidimensional.
\begin{definition}[Multidimensional Hidden Subset Sum Problem (mHSSP)~\cite{li2024perfect}] \label{def:mhssp}
    Let $Q$ be a positive integer modulus and $\{\v{x}_i\}_{i=1}^N$ be $N$ unknown secret vectors in $\mathbb{Z}_Q^u$. Let $\boldsymbol{\alpha}_1, \dots, \boldsymbol{\alpha}_N \in \{0, 1\}^M$ be $N$ unknown binary vectors. For each dimension $j \in \{1, \dots, u\}$, let $\v{h}_j \in \mathbb{Z}^M_Q$ be an observed sample vector satisfying:
    \begin{equation*}
        \v{h}_j \equiv \sum_{i=1}^N x_{i,j} \boldsymbol{\alpha}_i \pmod Q,
    \end{equation*}
    where $x_{i,j}$ is the $j$-th component of vector $\v{x}_i$. Given $Q$ and the sample matrix $\v{H} = [\v{h}_1, \dots, \v{h}_u] \in \mathbb{Z}^{M \times u}_Q$, the goal is to recover the secret vectors $\{\v{x}_i\}_{i=1}^N$ and the binary vectors $\{\boldsymbol{\alpha}_i\}_{i=1}^N$, up to a common permutation of the index $i$.
    
    The problem can be expressed more compactly in matrix form. Let $\v{X} = [\v{x}_1, \dots, \v{x}_N]^\top \in \mathbb{Z}^{N \times u}_Q$ and $\v{A} = [\boldsymbol{\alpha}_1, \dots, \boldsymbol{\alpha}_N] \in \{0, 1\}^{M \times N}$. The mHSSP is then characterized by the relation:
    \begin{equation*}
    \label{eq.mhssp}
        \v{H} \equiv \v{A}\v{X} \pmod Q.
    \end{equation*}
    
    Analogously, the \textbf{Multidimensional Hidden Linear Combination Problem (mHLCP)} follows the same definition, with the exception that the entries of $\v{A}$ are drawn from $\{0, \dots, c\}$ for some $c\in\mathbb{N}^+$.
\end{definition}

\section{Threat Model}\label{ssec.threat}

In this study, we consider local neighborhood SA protocols, where a corrupted receiver observes only its post-aggregation state together with its own and colluding nodes' states, while individual honest contributions and their exact coefficients remain hidden. The attack assumes access to aggregate outputs over a finite field.

We consider a set of honest-but-curious (semi-honest) nodes, denoted by $\mathcal{V}_c$, which follow the prescribed training and SA protocols but collude by sharing their locally observed information. The adversaries aim to infer the private states of the honest participants $\mathcal{V}_h$, with the goal of reconstructing their intermediate model parameters $\v{x}_h^{(t+\frac{1}{2})}$, which are intended to be protected by the SA mechanism. These local parameters $\v{x}_i^{(t+\frac{1}{2})}$ (and the corresponding updates $\Delta \v{x}_i^{(t)}$) are highly sensitive, as they encode rich information about the underlying training data. As demonstrated in Section~\ref{sssec.gia}, such information can be leveraged to reconstruct raw training samples with high fidelity.

Formally, for all iterations $t \in \mathcal{T}$, where $\mathcal{T} = \{0, 1, \dots, t_{\max}-1\}$, the minimum collective information set available to the adversaries is defined as
\begin{align}
\label{eq.adv_info}
    \mathcal{O} \coloneq \left\{ \v{x}_i^{(0)} \right\}_{i \in \mathcal{V}} 
    \cup \left\{ \v{x}_i^{(t)}, \v{x}_i^{(t+\frac{1}{2})} \right\}_{i \in \mathcal{V}_c,\, t \in \mathcal{T}} 
    \cup \left\{ W_{ij}^{(t+1)} \right\}_{i,j \in \mathcal{V}_c,\, t \in \mathcal{T}}.
\end{align}
This set captures the adversaries' local states before and after aggregation, their mutual weights, and the globally shared initialization (cf. Section~\ref{sssec.commonparadigm}).
 Beyond this baseline information, the adversary may possess varying degrees of topological metadata depending on the specific SA protocol in use. We categorize this prior knowledge into three hierarchical scenarios:
\begin{itemize}
    \item \textbf{Case 1: Full Neighbor Awareness (Identity-Aware).} In this scenario, each corrupted node $i \in \mathcal{V}_c$ knows the exact global identities (indices $j$) of its incoming neighbors. By colluding, adversaries can cross-reference these indices to reconstruct the local adjacency structure. The information set in this case is augmented by the support of the weight matrix
    \begin{equation*}
        \mathcal{O} \coloneq \mathcal{O} \cup \left\{ \mathbbm{1}_{\{W_{ij} > 0\}} \right\}_{i \in \mathcal{V}_c, j \in \mathcal{V}},
    \end{equation*}
    where $\mathbbm{1}_{\{\cdot\}}$ is the indicator function. This allows the adversary to pinpoint which honest nodes' parameters are involved in their specific aggregation.

    \item \textbf{Case 2: Partial Awareness (Degree-Only).} Each corrupted node $i \in \mathcal{V}_c$ is aware of its in-degree $d_i^{in}$ (i.e., the number of nodes it aggregates from) but cannot map these neighbors to specific global identities. Since identities are masked, colluding adversaries cannot link their local observations to form a global map. The additional knowledge is restricted to the cardinality of the neighbor sets
    \begin{equation*}
        \mathcal{O} \coloneq \mathcal{O} \cup \left\{ d_i^{in} \right\}_{i \in \mathcal{V}_c}.
    \end{equation*}

    \item \textbf{Case 3: Zero Awareness.} The SA protocol provides no information regarding the identities or the quantity of incoming neighbors.

\end{itemize}

The three cases correspond to different real-world implementations, as detailed in Appendix~\ref{app.cases}.

\section{Problem Formulation}\label{sec.formulation}

In this section, we first establish a unified matrix formulation for the inference problem. Building upon this, we propose a topology-aware reduction method to further simplify the model by stripping away redundant connections. Finally, we map the resulting matrix structure of the core sub-graph to the HSSP framework for joint reconstruction.

\subsection{Matrix Representation}

By analyzing the global model evolution from the adversarial perspective, we characterize the resulting privacy leakage as an algebraically decoupled linear inverse problem. 

We partition the weight matrix $\v{W}^{(t)}$ according to the indices of corrupted nodes $\mathcal{V}_c$ and honest nodes $\mathcal{V}_h$. Without loss of generality, assuming the first $|\mathcal{V}_c|$ rows/columns correspond to $\mathcal{V}_c$, $\v{W}^{(t)}$ is partitioned as:
\begin{align*}
\v W^{(t)} =\begin{pmatrix}
\v W^{(t)}_{cc} & \v W^{(t)}_{ch}  \\
\v W^{(t)}_{hc} & \v W^{(t)}_{hh}  \\
\end{pmatrix},
\end{align*}
where $\v{W}^{(t)}_{cc}$ and $\v{W}^{(t)}_{hh}$ represent internal weights, while $\v{W}^{(t)}_{ch}$ and $\v{W}^{(t)}_{hc}$ represent cross-group influences.
Let $\v X_c^{(t)}$ and $\v X_h^{(t)}$ denote the model parameters held by the respective node sets, we summarize this structural reduction in the following proposition.
\begin{proposition}
\label{prop:isolation}
Given the adversarial observation set $\mathcal{O}$, the global aggregation process over $t_{\max}$ iterations can be algebraically decoupled into independent round-wise linear systems. Specifically, for any iteration $t \in \mathcal{T}$, the privacy inference of honest states reduces to solving the core sub-problem:
\begin{align} \label{eq.sub_problem}
    \forall t \in \mathcal{T}: \v{W}_{ch}^{(t+1)} \v{X}_h^{(t+\frac{1}{2})} = \v{X}_c^{(t+1)}-\v W_{cc}^{(t+1)}\v X_c^{(t+\frac{1}{2})}, 
\end{align}
where the right-hand side consists entirely of information accessible to the adversaries. 
The adversary's capability differs across iterations. At $t=0$, the adversary can exactly recover the local updates $\Delta \v{x}_i^{(0)}$ for all $i \in \mathcal{V}_h$, due to the globally known initialization. For $t \geq 1$, the adversary can recover the intermediate states $\v{X}_h^{(t+\frac{1}{2})}$ but cannot determine the corresponding updates $\Delta \v{x}_i^{(t)}$, as reconstructing $\v{X}_h^{(t)}$ requires access to the internal weights $\v{W}_{hc}^{(t+1)}$ and $\v{W}_{hh}^{(t+1)}$, which are not observable. Therefore, exact gradient reconstruction is only possible at $t=0$, which is sufficient for the downstream attack considered in this work.
\end{proposition}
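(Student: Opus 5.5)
The plan is to extract the corrupted rows of the aggregation rule \eqref{eq.aggr} and check, term by term, that everything except $\v W_{ch}^{(t+1)}\v X_h^{(t+\frac12)}$ lies in the observation set $\mathcal{O}$ of \eqref{eq.adv_info}.

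\textbf{Step 1: block decomposition.} Write \eqref{eq.aggr} in the $(\mathcal{V}_c,\mathcal{V}_h)$ partition of $\v W^{(t+1)}$. The first block row reads
\begin{align*}
\v X_c^{(t+1)} = \v W_{cc}^{(t+1)}\v X_c^{(t+\frac12)} + \v W_{ch}^{(t+1)}\v X_h^{(t+\frac12)} .
\end{align*}
Moving the first term to the left gives \eqref{eq.sub_problem}.

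\textbf{Step 2: the right-hand side is observable.} By \eqref{eq.adv_info}, $\v X_c^{(t+1)}$ and $\v X_c^{(t+\frac12)}$ belong to $\mathcal{O}$ for all $t\in\mathcal{T}$. This holds for $t+1=t_{\max}$ as well, since the final post-aggregation state is observed by the corrupted nodes themselves. Likewise $W^{(t+1)}_{ij}$ with $i,j\in\mathcal{V}_c$ belongs to $\mathcal{O}$, so $\v W_{cc}^{(t+1)}$ is known.

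\textbf{Step 3: decoupling across rounds.} The unknowns at round $t$ are $\v W_{ch}^{(t+1)}$ and $\v X_h^{(t+\frac12)}$. The weights $\v W^{(t+1)}$ may vary with $t$, and the honest states at each round are treated as free unknowns. The right-hand side of round $t$ is fully determined by observed quantities and does not involve any unknown from another round. The system therefore splits into $t_{\max}$ independent linear systems, one per round. No cross-round relation is needed, because any such relation (through $\v W_{hc},\v W_{hh}$) introduces further unobservable weights.

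\textbf{Step 4: what can be recovered at $t=0$ and at $t\ge1$.} Suppose \eqref{eq.sub_problem} is solved for $\v X_h^{(t+\frac12)}$ (by the HSSP machinery developed later). At $t=0$, equation \eqref{eq.local_update} gives $\Delta\v x_i^{(0)}=\v x_i^{(\frac12)}-\v x_i^{(0)}$, and $\v x_i^{(0)}$ is globally known by the identical initialization recorded in $\mathcal{O}$. Hence the update is recovered exactly.

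For $t\ge1$, the bottom block row gives
\begin{align*}
\v X_h^{(t)} = \v W_{hc}^{(t)}\v X_c^{(t-\frac12)} + \v W_{hh}^{(t)}\v X_h^{(t-\frac12)} .
\end{align*}
This expression depends on $\v W_{hc}^{(t)}$ and $\v W_{hh}^{(t)}$, which lie outside $\mathcal{O}$. To make the negative claim precise, I would exhibit two weight assignments that are consistent with all corrupted-row observations but yield different $\v X_h^{(t)}$. For example, perturb $\v W_{hh}$ while preserving the stochasticity constraints. Then $\Delta\v x_i^{(t)}$ is not determined by the observations.

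\textbf{Main obstacle.} Steps 1--3 and the $t=0$ case are direct. The delicate part is making the $t\ge1$ non-identifiability rigorous: the perturbation of $(\v W_{hc},\v W_{hh})$ must stay in $\mathcal{W}$ and respect stochasticity while leaving all corrupted observations unchanged. I would handle this with a small symmetric perturbation on an honest--honest edge, assuming one exists.
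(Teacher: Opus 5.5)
Your Steps 1--3 are the paper's argument in unstacked form. The paper writes all rounds as one system $\v Y=\v A\v B$, moves the known terms $\v W_{cc}^{(t)}\v X_c^{(t-\frac12)}$ to the left, and reads the round-wise decoupling off the block-diagonal structure of $\v A$, whose blocks are $\v W_{ch}^{(t)}$. Your per-round extraction of the corrupted block row carries the same content, and your $t=0$ argument is the one the statement intends.

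The real difference is the $t\ge1$ claim. The paper's proof only restates that $\v W_{hc},\v W_{hh}$ are hidden, whereas you propose an actual non-identifiability construction. In the undirected doubly stochastic setting it works as follows. Pick an honest--honest edge carrying positive weight, decrease $W^{(t)}_{ij}=W^{(t)}_{ji}$ by a small rational $\epsilon$, and increase $W^{(t)}_{ii},W^{(t)}_{jj}$ by $\epsilon$. This stays in $\mathcal W$ with the same off-diagonal support, and keeps symmetry, both stochasticity constraints, entries in $[0,1]$ and, for small $\epsilon$, the spectral condition. Then redefine $\Delta\v x_i^{(t)},\Delta\v x_j^{(t)}$ so that $\v X_h^{(t+\frac12)}$, and hence everything in $\mathcal O$, is unchanged. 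Since $\v x_i^{(t)}$ moves by $\epsilon(\v x_i^{(t-\frac12)}-\v x_j^{(t-\frac12)})$, you also need these two half-step states to differ.

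Your caveat that such an edge exists is not a technicality. Take a symmetric doubly stochastic $\v W^{(t)}$, Case 1 identities, and an attack that succeeds in rounds $t-1$ and $t$. An honest node $i$ whose neighbours are all corrupted then has $W^{(t)}_{ic}=W^{(t)}_{ci}$ already recovered as part of $\v W_{ch}^{(t)}$ in round $t-1$. It also has $W^{(t)}_{ii}=1-\sum_c W^{(t)}_{ic}$, so $\v x_i^{(t)}$, and therefore $\Delta\v x_i^{(t)}$, is determined. The negative half of the proposition is thus not true unconditionally, and some hypothesis of your kind, which the paper never states, is needed. Under Push-Sum the perturbation must also leave the observed auxiliary scalars unchanged, because $a_h^{(t)}$ receives no local update and feeds into $a_c^{(t+1)}$.

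Two further points are in your favour. You correctly use $\v W^{(t)}_{hc},\v W^{(t)}_{hh}$ rather than the $(t+1)$ superscripts in the statement. Your remark on $t+1=t_{\max}$ also patches the fact that $\v X_c^{(t_{\max})}$ is not formally included in $\mathcal O$.
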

\begin{proof}
    See Appendix~\ref{app.proof_isolation}.
\end{proof}

\subsection{Topology-aware Simplification}\label{ssec.graph_sim}

By inspecting \eqref{eq.sub_problem}, it is evident that the reconstruction of honest states $\v{X}_h^{(t+\frac{1}{2})}$ depends heavily on the structure and properties of $\v{W}_{ch}^{(t+1)}$. In decentralized networks, the sparsity of the graph often dictates that the global inference task is not a single monolith, but rather a collection of loosely coupled or even isolated sub-problems. 

The motivation for simplifying the problem based on topology is two-fold. First, the computational complexity of the HSSP attack is highly sensitive to the number of unknown honest variables ($|\mathcal{V}_h|$). By decomposing the graph into smaller sub-components, we can significantly reduce the search space for the lattice-based solver. Second, specific topological configurations allow for trivial reconstructions that do not require complex solvers. 
To systematically reduce the problem size, we classify nodes into three categories based on their neighborhood connectivity. We begin our discussion with undirected graphs, noting that the logic naturally extends to directed graphs with minor adjustments to the weight constraints.

\subsubsection{Undirected Graph.}

As also noted in \cite{dekker2025topology}, secure summation is vulnerable if a corrupted node has only one honest neighbor. In a weighted aggregation setting, this vulnerability persists as such trivial attack. If a corrupted node $i$ has exactly one honest neighbor $j$, the adversary can directly compute $\v{x}_j^{(t+\frac{1}{2})}$ by subtracting the contributions of its corrupted neighbors and itself from the aggregated result, i.e.,
\begin{align}\label{eq.trivial_attack}
    \v{x}_j^{(t+\frac{1}{2})} = \frac{\v{x}_i^{(t+1)} - \sum_{k \in (\mathcal{N}_i \cap \mathcal{V}_c)\cup\{i\}} W_{ik}^{(t+1)} \v{x}_k^{(t+\frac{1}{2})}}{W_{ij}^{(t+1)}},
\end{align}
where $W_{ij}^{(t+1)} = 1 - \sum_{k \in (\mathcal{N}_i \cap \mathcal{V}_c)\cup\{i\}} W_{ik}^{(t+1)}$ due to the row-stochastic property of $\v{W}^{(t+1)}$.

Thus, corrupted nodes $i \in \mathcal{V}_c$ can be categorized into three types based on their honest neighborhood $\mathcal{N}_i \cap \mathcal{V}_h$: 
\begin{itemize}
    \item If $|\mathcal{N}_i \cap \mathcal{V}_h| = 0$, the node provides no extra information about honest participants, as its observations are entirely determined by other corrupted nodes.
    \item If $|\mathcal{N}_i \cap \mathcal{V}_h| = 1$, the state of its sole honest neighbor can be uniquely and directly recovered via \eqref{eq.trivial_attack}.
    \item If $|\mathcal{N}_i \cap \mathcal{V}_h| \geq 2$, the local observation at node $i$ is individually underdetermined.
\end{itemize}
This categorization is applicable to both Case 1 and 2, since the knowledge of $d_i^{in}$ allows it to determine the exact number of honest contributors.

Correspondingly, from the perspective of each independent honest node $i \in \mathcal{V}_h$, its exposure risk within a single iteration is determined as
\begin{itemize}
    \item If there exists at least one corrupted neighbor $j \in \mathcal{V}_c$ such that $i$ is its only honest neighbor ($|\mathcal{N}_j \cap \mathcal{V}_h| = 1$), node $i$ is compromised by a trivial attack.
    \item If the node has no corrupted neighbors ($\mathcal{N}_i \cap \mathcal{V}_c = \emptyset$), its information cannot be leaked to the adversary within a single iteration, as DFL updates only propagate to immediate neighbors in one hop.
    \item Otherwise, its privacy depends on the resolvability of the resulting HSSP attack.
\end{itemize}

Based on these classifications, the network can be simplified into a reduced bipartite graph, as illustrated in Figure~\ref{fig:bipartite}(a), by stripping away redundant or trivial connections. In this refined structure, red dashed lines represent communication between corrupted nodes, which is fully transparent to the adversaries. Conversely, black dashed lines denote internal interactions among honest nodes that remain inaccessible to the adversaries within a single iteration. The red solid lines highlight "trivial" links where an honest node serves as the sole neighbor of an adversary; the privacy states of these nodes are directly reconstructed and subsequently removed from the inference set. Finally, the yellow solid lines constitute the core sub-graph, where each corrupted node is coupled with multiple honest neighbors. This reduced structure represents the minimal and most challenging form of the inference problem.~\footnote{As any arbitrary graph can be systematically decomposed through this methodology, we adopt the convention that \textbf{all graphs discussed hereafter are the simplified core sub-graphs}. This simplification allows us to focus the analysis on the fundamental solvability of the HSSP-based attack within the most difficult components of the network.}

\begin{figure}[h]
  \centering
  \includegraphics[width=0.85\linewidth]{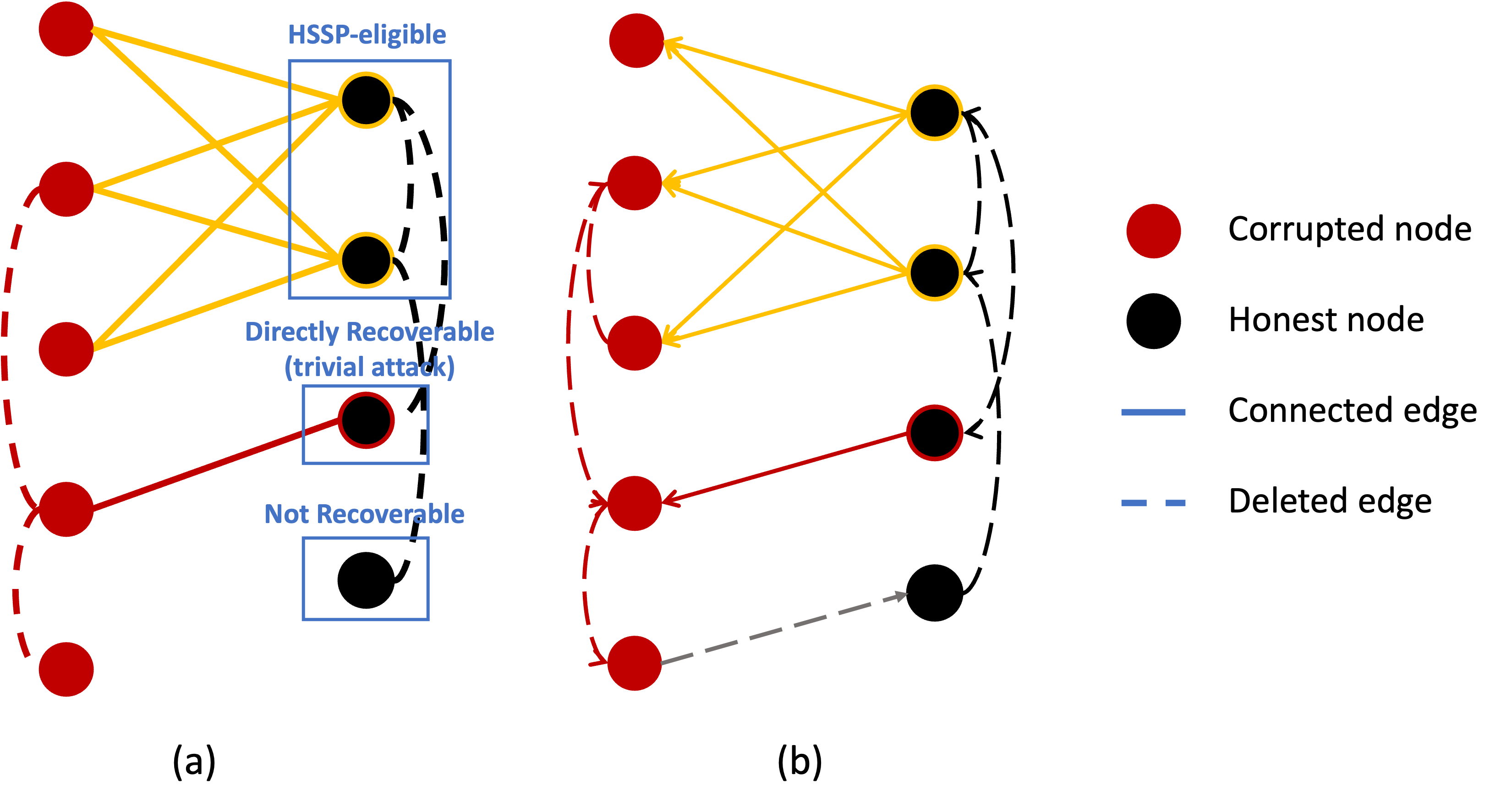}
  \caption{Example of (a) undirected graph and (b) directed simplification. For an undirected graph, red dashed lines denote transparent intra-adversary communication; black dashed lines represent inaccessible intra-honest interactions. Red solid lines highlight trivial links for direct reconstruction, while yellow solid lines identify the core sub-graph targeted for HSSP-based inference. For a directed graph, arrows indicate the direction of information flow. Grey solid arrows represent information flowing from corrupted nodes to honest nodes, providing no utility for inference, and other symbols are consistent with the undirected graph.}
  \Description{}
  \label{fig:bipartite}
\end{figure}

\subsubsection{Directed Graph}

We assume the directed graph is strongly connected, as this is a necessary condition for achieving consensus. Strong connectivity ensures that information flow covers the entire network, implying that every node possesses at least one in-neighbor and one out-neighbor. While the simplification logic for directed graphs parallels that of undirected graphs, several distinctions need to be addressed.
First, the classification of corrupted nodes must focus exclusively on their in-neighbors, as these represent the direction of information flow from honest participants to the adversaries. Second, unlike undirected protocols, directed graph algorithms typically do not satisfy the row-stochastic constraint. Consequently, even if a corrupted node has exactly one honest in-neighbor, its privacy states cannot always be uniquely recovered because the precise weight is unknown.
However, since weights in DFL are rational numbers constrained within $[0,1)$, the adversary can still narrow down the possible values. By identifying the least common denominator of the potential weights (see Section~\ref{ssec.connection}), the adversary can restrict the honest node’s privacy state to a finite set of candidate solutions. This "directed trivial attack" yields a discrete solution space rather than a unique value.

Figure~\ref{fig:bipartite}(b) illustrates the simplification of a directed graph. We use the arrow to signify the direction of information flow. Edges from corrupted nodes to honest nodes do not provide useful observations for
inferring honest states at corrupted receivers (represented by the grey line). The red solid line indicates scenarios where a directed trivial attack can be executed to obtain a finite solution set.

\subsection{Connection to HSSP}\label{ssec.connection}
Following the topological simplification process, the task reduces to a matrix decomposition inverse problem. In this subsection, we establish a formal reduction showing that localized aggregation in DFL can be formulated as solving a lattice-based cryptographic problem.
\begin{theorem}
\label{thm:reduction}
Let the continuous linear system $\v{Y} = \v{W}_{ch} \v{X}_h$ denote the core sub-problem after topological pruning, where $\v{Y} \in \mathbb{R}^{M \times u}$ is the known adversarial observation matrix, $\v{W}_{ch} \in \mathbb{Q}^{M \times N}$ is the hidden weight matrix, and $\v{X}_h \in \mathbb{R}^{N \times u}$ encapsulates the unknown private honest states. 

Under the conditions of:
\begin{enumerate}
    \item There exists a known scaling integer $\beta \in \mathbb{Z}^+$ such that $\tilde{\v{W}} = \beta \v{W}_{ch} \in \mathbb{Z}^{M \times N}$.
    \item \red{The honest states can be mapped exactly into the integer domain using a
precision factor $\gamma$, yielding
$\tilde{\v{X}}_h=10^{\gamma}{\v{X}}_h\in\mathbb{Z}^{N\times u}$.\footnote{We consider the idealized setting in which the integer-domain
conversion is exact here. The truncation errors introduced by
finite-precision conversion are addressed in
Section~\ref{sec:finite_precision_reduction}.}}
    \item The scaled integer weights $\tilde{W}_{ij}$ are bounded within a known discrete set $\mathcal{C} \subset \mathbb{Z}$.
\end{enumerate}
\red{
Then the recovery of the hidden scaled weight matrix $\tilde{\v W}$ and state $\tilde{\v X_{h}}$ from the aggregated observations can be formulated as an instance of mHSSP if $\mathcal{C} \subseteq \{0, 1\}$, or mHLCP if $\mathcal{C} \subseteq \{0, 1, \dots, c\}$ for some integer $c \geq 2$.
}
\end{theorem}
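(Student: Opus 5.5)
The reduction is a direct algebraic rescaling followed by reduction modulo a suitably large integer. We then match the resulting congruence term by term against Definition~\ref{def:mhssp}.

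First, we multiply the core sub-problem $\v{Y} = \v{W}_{ch}\v{X}_h$ (obtained from Proposition~\ref{prop:isolation} after the pruning of Section~\ref{ssec.graph_sim}) on the left by $\beta$ and on the right by $10^{\gamma}$. Define the scaled observation $\tilde{\v{Y}} := \beta 10^{\gamma}\v{Y}$. Then
\begin{align*}
\tilde{\v{Y}} = (\beta\v{W}_{ch})(10^{\gamma}\v{X}_h) = \tilde{\v{W}}\tilde{\v{X}}_h .
\end{align*}
By conditions (1) and (2), both factors are integer matrices, so $\tilde{\v{Y}} \in \mathbb{Z}^{M\times u}$. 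Since $\beta$ and $\gamma$ are known, $\tilde{\v{Y}}$ is computable from the adversary's information $\mathcal{O}$.

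Second, we pass to a finite ring. We choose an integer modulus $Q$ exceeding $2\max_{i,j}|\tilde{Y}_{ij}|$, and also exceeding twice any a priori bound on the entries of $\tilde{\v{X}}_h$. Under the threat model of Section~\ref{ssec.threat}, the aggregation is over a finite field, so one can alternatively take $Q$ to be that field's modulus. Reducing entrywise then gives $\tilde{\v{Y}} \equiv \tilde{\v{W}}\tilde{\v{X}}_h \pmod Q$. With centred representatives the reduction is injective on the relevant range, so no information is lost: recovering the residues of $\tilde{\v{X}}_h$ recovers $\tilde{\v{X}}_h$ itself, and hence $\v{X}_h = 10^{-\gamma}\tilde{\v{X}}_h$.

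Third, we identify the components of this congruence with those of the problem definitions:
\begin{itemize}
\item $\v{H} := \tilde{\v{Y}} \bmod Q$ plays the role of the sample matrix;
\item $\v{A} := \tilde{\v{W}} \in \mathcal{C}^{M\times N}$ plays the role of the hidden coefficient matrix, whose column $\boldsymbol{\alpha}_i$ collects the scaled weights of honest node $i$ at the $M$ corrupted receivers;
\item $\v{X} := \tilde{\v{X}}_h \bmod Q$ plays the role of the $N$ secret vectors in $\mathbb{Z}_Q^u$.
\end{itemize}
If $\mathcal{C}\subseteq\{0,1\}$, then $\v{A}\in\{0,1\}^{M\times N}$, which is exactly the mHSSP of Definition~\ref{def:mhssp}. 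If $\mathcal{C}\subseteq\{0,\dots,c\}$ with $c\ge 2$, the same relation is exactly an mHLCP instance.

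We must also check that the solution notions coincide. Both problems recover $(\v{A},\v{X})$ only up to a common permutation of the index $i$. This ambiguity corresponds to not knowing which honest node is which. That ambiguity is harmless in Case~1 (the support of $\tilde{\v{W}}$ pins down identities) and is inherent in Cases~2 and 3. Any solution of the mHSSP/mHLCP instance therefore yields $(\tilde{\v{W}},\tilde{\v{X}}_h)$ consistent with the observations.

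The main obstacle is the justification of the modulus and of nonnegativity. The definitions require entries of $\v{A}$ in $\{0,\dots,c\}$; this holds because the weights lie in $[0,1]$ and $\beta>0$. The definitions also require secrets in $\mathbb{Z}_Q$, whereas $\tilde{\v{X}}_h$ may be negative. I would handle this by working with centred residues, or by shifting by a known offset absorbed into $\v{H}$ using the row sums of $\tilde{\v{W}}$ when these are known. Failing that, I would argue that the congruence holds for any representative, so the instance is well posed regardless. The remaining steps are bookkeeping.
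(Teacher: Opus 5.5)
Your proposal is correct and follows the paper's proof essentially step for step. You scale by $\beta$ and $10^{\gamma}$ to obtain $\tilde{\v{Y}}=\tilde{\v{W}}\tilde{\v{X}}_h$ over $\mathbb{Z}$, reduce entrywise modulo a large $Q$, and identify the columns of $\tilde{\v{W}}$ as the hidden coefficient vectors and the rows of $\tilde{\v{X}}_h$ as the secrets, with $\mathcal{C}$ deciding between mHSSP and mHLCP; the paper takes a prime $Q$ meeting the lattice-attack requirements, and your extra bound that makes centred residues liftable is a harmless strengthening. One side remark is overstated: in Case~1 the known supports fix identities only when the honest nodes have distinct sets of corrupted neighbours (the paper notes that only a partial ordering is guaranteed), but this does not affect the reduction itself.
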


\begin{proof}
    See Appendix~\ref{app.reduction}.
\end{proof}

Theorem~\ref{thm:reduction} establishes the algebraic framework of the attack. However, whether the adversary can uniquely recover the ground-truth solution from the resulting system critically depends on the information diversity induced by the network topology.

\begin{remark}[Topological Asymmetry]

The unique identifiability of the honest nodes' private states is governed by the rank of the joint observation matrix. Within the minimal core sub-graph, all "trivial" instances have already been resolved and removed; the remaining system consists of honest nodes whose states are coupled. A prerequisite for identifying a unique solution is that the resulting system must not be underdetermined. Mathematically, this imposes a structural necessity: the number of independent observations (corrupted nodes) must be at least equal to the number of hidden variables (honest nodes) within the core sub-graph, i.e., $|\mathcal{V}_c| \geq |\mathcal{V}_h|$. 

While this node cardinality is a necessary condition, topological asymmetry is the essential catalyst that ensures the system is not rank-deficient. In a perfectly symmetric network, such as a fully connected graph with uniform weights (e.g., \cite{zhang2026information}), all adversaries share redundant observational vantage points. This causes the rows of $\v{W}_{ch}$ to be linearly dependent, resulting in a rank-deficient system ($\text{rank}=1$) where a unique solution to the HSSP cannot exist regardless of the number of adversaries. Conversely, the heterogeneous neighborhood structures and non-uniform weights inherent in practical DFL break this symmetry. This observational diversity ensures that colluding adversaries capture linearly independent projections of the honest nodes' states, allowing $\v{W}_{ch}$ to achieve the full column rank necessary for a unique and deterministic reconstruction.

\end{remark}

\red{
\subsection{Finite-Precision Reduction to Noisy
mHSSP/mHLCP}
\label{sec:finite_precision_reduction}

Theorem~\ref{thm:reduction} focuses on settings in which the
aggregation process admits an exact integer-domain representation. This
condition is satisfied when the protocol operates directly over an integer or finite-field domain, or when all represented states lie
exactly on a prescribed fixed-point grid. In practical finite-precision
implementations, rounding and truncation may introduce a bounded numerical mismatch.

Define the scalar integerization operator
\begin{equation*}
q_{\gamma}(z)=\operatorname{trunc}(10^{\gamma}z),
\end{equation*}
where $\operatorname{trunc}(\cdot)$ denotes truncation operation. For a matrix $\boldsymbol{Z}$, $q_{\gamma}(\boldsymbol{Z})$
denotes the element-wise application of $q_{\gamma}$.

Let
\[
\tilde{\boldsymbol{Y}}
=
q_{\gamma}
\left(
\beta\boldsymbol{Y}^{}
\right)
\qquad\text{and}\qquad
\tilde{\boldsymbol{X}}_h
=
q_{\gamma}(\boldsymbol{X}_h).
\]
We ues $\boldsymbol{E}$ to represent the aggregate numerical residual.
Then the finite-precision observation therefore satisfies
\begin{equation}
\tilde{\boldsymbol{Y}}
\equiv
\tilde{\v{W}}\tilde{\boldsymbol{X}}_h
+
\boldsymbol{E}
\pmod Q.
\label{eq:noisy_mhssp_relation}
\end{equation}
The residual $\boldsymbol{E}$ jointly captures the numerical errors
introduced by finite-precision aggregation and by the integerization of
the aggregate and local states.

This reformulation recasts the reconstruction problem as a noisy mHSSP/mHLCP instance.

}

\section{HSSP and attacks}\label{sec.hssp}
In this section, we provide an introduction to HSSP attack and the fundamental concepts essential to the proposed attack framework.

\subsection{Lattice Fundamentals}

A lattice $\mathcal{L}$ in $\mathbb{R}^M$ is a discrete additive subgroup formed by the integer linear combinations of a set of basis vectors. It is defined as~\cite{cassels1971introduction,micciancio2002complexity}:
\begin{definition}[Lattice]
Given a set of $N$ linearly independent vectors $\{\v{b}_1, \dots, \v{b}_N\}$ in $\mathbb{R}^M$ (where $M \geq N$), the lattice $\mathcal{L}$ generated by these vectors is the set:
\begin{equation*}
    \mathcal{L}(\v{b}_1, \dots, \v{b}_n) = \left\{ \sum_{i=1}^N x_i \v{b}_i : x_i \in \mathbb{Z}, i = 1, \dots, N \right\}.
\end{equation*}
The parameter $N$ denotes the rank of the lattice, which is also represented as $\text{dim}(\mathcal{L})$. When $N = M$, the lattice is referred to as full-rank. 
\end{definition}

The lattice can be expressed compactly as $\mathcal{L}(\v{B}) = \{ \v{B} \cdot \v{x} : \v{x} \in \mathbb{Z}^N \}$, where matrix $\v{B} \in \mathbb{R}^{M \times N}$ 
comprises the basis vectors $\v{b}_1, \dots, \v{b}_N$ as its columns.
In this work, we focus on integer lattices, i.e., sublattices of $\mathbb{Z}^M$. 
Given a lattice $\mathcal{L} \subseteq \mathbb{Z}^M$, it is useful to consider the following concept.
\begin{definition}[Orthogonal Lattice]
For an integer lattice $\mathcal{L} \subseteq \mathbb{Z}^M$, its orthogonal lattice $\mathcal{L}^\perp$ is defined by
\begin{equation*}
    \mathcal{L}^\perp := \{ \v{y} \in \mathbb{Z}^M : \forall \v{x} \in \mathcal{L}, \langle \v{x}, \v{y} \rangle = 0 \}.
\end{equation*}
 Furthermore, the orthogonal lattice modulo $Q$ is defined as
\begin{equation*}
    \mathcal{L}_Q^\perp := \{ \v{y} \in \mathbb{Z}^M : \forall \v{x} \in \mathcal{L}, \langle \v{x}, \v{y} \rangle \equiv 0 \pmod Q \}.
\end{equation*}
\end{definition}

The lattice completion $\bar{\mathcal{L}}$ is defined as $\bar{\mathcal{L}} = \text{Span}_{\mathbb{R}}(\mathcal{L}) \cap \mathbb{Z}^M = (\mathcal{L}^\perp)^\perp$.

\begin{definition}[Shortest Vector and Successive Minima]
 The length of the shortest non-zero vector $\v{v}$ in $\mathcal{L}$ is the first minimum, denoted as $\lambda_1(\mathcal{L}) = \|\v{v}\|$. More generally, for a lattice of rank $N$, the $i$-th successive minimum $\lambda_i(\mathcal{L})$ is the smallest radius $r$ such that the closed ball of radius $r$ centered at the origin contains at least $i$ linearly independent lattice vectors, for $i\in\{1,2,\cdots,N\}$.
\end{definition}

Following these definitions, lattice basis reduction algorithms are important tools for addressing cryptographic challenges such as HSSP, where they are used to recover short vectors corresponding to hidden structures. The prominent techniques are briefly summarized as follows:
\begin{itemize}
    \item \textbf{LLL Algorithm:} Proposed by Lenstra, Lenstra, and Lovász \cite{lenstra1982factoring}, this polynomial-time algorithm transforms an arbitrary lattice basis into an LLL-reduced basis. Such a basis consists of relatively short, nearly orthogonal vectors, which significantly simplifies tasks such as finding the shortest lattice vector.
    
    \item \textbf{BKZ Algorithm:} The Block Korkine-Zolotarev (BKZ) algorithm \cite{chen2011bkz} generalizes LLL by performing basis reduction on blocks of size $k$. It offers a tunable trade-off, where larger block sizes provide higher reduction quality (i.e., closer to the shortest vector) at the cost of increased computational effort. While BKZ with the lowest block size $k=2$ runs in polynomial time, finding the absolute shortest vector generally requires exponential time with full block size.
\end{itemize}

\subsection{The Nguyen-Stern Attack for HSSP and mHSSP}

Based on LLL and BKZ algorithm, the Nguyen-Stern (NS) attack framework \cite{nguyen1999hardness} is a cornerstone technique for solving HSSP. The attack is executed in two primary phases.

\paragraph{Step 1: }
This phase aims to identify the orthogonal lattice associated with the hidden coefficient matrix $\v{A}=[\v{\alpha}_1, \dots, \v{\alpha}_N]$ in Definition~\ref{def:hssp}. Given the observed vector $\v{h} \equiv \sum_{i=1}^{N} x_i \v{\alpha}_i \pmod Q$, the attack begins by constructing the orthogonal lattice modulo $Q$ of $\v{h}$, denoted as $\mathcal{L}_Q^\perp(\v{h})$. 

Since $\v{h}$ is a linear combination of the hidden binary vectors $\{\v{\alpha}_i\}_{i=1, \dots, N}$, the orthogonal lattice $\mathcal{L}^\perp(\v{A})$ is inherently contained within $\mathcal{L}_Q^\perp(\v{h})$. Under the assumption that the $\v{\alpha}_i$ vectors are binary and sufficiently short, the first $M-N$ short vectors obtained from an LLL-reduced basis of $\mathcal{L}_Q^\perp(\v{h})$ are highly likely to span $\mathcal{L}^\perp(\v{A})$. Subsequently, the completion lattice $\bar{\mathcal{L}}(\v{A}) = (\mathcal{L}^\perp(\v{A}))^\perp$ is computed by applying the LLL algorithm a second time. This resulting lattice contains the target lattice $\mathcal{L}(\v{A})$, effectively isolating the span of the hidden coefficients.

This methodology can be naturally extended to the mHSSP setting by replacing the observation vector $\v{h}$ with the matrix $\v{H} \in \mathbb{Z}_Q^{M \times u}$, as formally established in \cite{li2024perfect}.

\paragraph{Step 2: }
The second phase focuses on extracting the specific binary vectors $\v{\alpha}_i$ from the reduced basis of $\bar{\mathcal{L}}(\v A)$. To achieve higher precision than LLL, the BKZ algorithm is typically employed to discover shorter vectors within this space. Let $\{\v{v}_i\}_{i=1}^N$ be the set of short vectors obtained.

Due to the binary nature of the target vectors, the $\v{\alpha}_i$ coefficients are expected to be either the vectors $\v{v}_i$ themselves or simple linear combinations, such as $\v{v}_i \pm \v{v}_j$. By searching the set $\{\v{v}_i\} \cup \{\v{v}_i - \v{v}_j\} \cup \{\v{v}_i + \v{v}_j\}$, the adversary identifies $N$ candidate binary vectors to reconstruct the matrix $\v{A}$. Once $\v{A}$ is determined, a non-singular $N \times N$ sub-matrix $\v{A}'$ is selected alongside its corresponding observation $\v{h}'$. The hidden private data $\v{x}$ is then recovered by solving the linear system:
\begin{equation}\label{eq.reconstruct}
    \v{x} \equiv (\v{A}')^{-1} \v{h}' \pmod Q.
\end{equation}

The effectiveness of the NS attack is contingent upon the parameters $M$, $N$, and $Q$ satisfying specific density conditions. The fundamental intuition is that any vector $\v{y}$ orthogonal to $\v{h}$ modulo $Q$ must also be orthogonal to each hidden vector $\v{\alpha}_i$. Specifically, we consider
\begin{equation*}
    \langle \v{y}, \v{h} \rangle = x_1 \langle \v{y}, \v{\alpha}_1 \rangle + \dots + x_N \langle \v{y}, \v{\alpha}_N \rangle \equiv 0 \pmod Q.
\end{equation*}
Let $\v{p}_{\v{y}} = (\langle \v{y}, \v{\alpha}_1 \rangle, \dots, \langle \v{y}, \v{\alpha}_n \rangle)$. If the norm of $\v{p}_{\v{y}}$ is smaller than the first minimum of the lattice $\mathcal{L}_Q^\perp(\v{x})$, then $\v{p}_{\v{y}}$ must be the zero vector, which confirms that $\v{y} \in \mathcal{L}^\perp(\v{A})$. To guarantee this with high probability, the parameters should satisfy the following bound~\cite{gini2022hardness}
\begin{equation*}
    \log Q > \iota MN + \frac{MN}{2(M-N)} \log M + \frac{N}{2} \log(N\frac{\gamma_{M-N}}{\gamma_N}),
\end{equation*}
where $0 < \iota < 1$ represents the LLL Hermite factor and $\gamma_{(\cdot)}$ represents the Hermite constant with corresponding dimensions, which characterizes the quality of the basis reduction.

  \subsection{Extending the Nguyen-Stern Attack for HLCP}

When transitioning from HSSP to the more general HLCP, where hidden coefficients are drawn from the integer range $\{0, \dots, c\}$, the NS attack framework remains structurally applicable but requires significant parameter adjustments and algorithmic enhancements.

 In the first phase, identifying the orthogonal lattice $\mathcal{L}^\perp(\v{A})$, follows the same procedural logic as the binary case. However, the presence of larger coefficients in $\{0, \dots, c\}$ necessitates a substantially larger modulus $Q$ to ensure that the target vectors remain among the shortest in the lattice. To guarantee the success of the orthogonal attack with non-negligible probability, the bitsize of $Q$ must satisfy a higher theoretical lower bound, specifically
\begin{equation*}
    \log Q > \iota MN + \frac{MN}{2(M-N)} \log M + \frac{N}{2} \log(N\frac{\gamma_{M-N}}{\gamma_N}) + \frac{MN}{M-N}\log c.
\end{equation*}
As $c$ increases, $Q$ must scale exponentially to maintain a sufficiently low density, ensuring that the short vectors found in $\mathcal{L}_Q^\perp(\v{H})$ correctly span the hidden subspace.

In the second phase, recovering the coefficient vectors from the basis of the completed lattice $\bar{\mathcal{L}}(\v{A})$, becomes significantly more challenging. In the HLCP setting, the recovered LLL basis vectors can be much larger than the original vectors. Since the hidden $\v{\alpha}_i$ vectors are still expected to be among the shortest non-zero elements of the lattice, the more powerful BKZ algorithm is required. By providing a superior approximation factor, BKZ enables the adversary to search the lattice more effectively and pinpoint the true hidden coefficients within the expanded search space.

While the standard NS attack is effective, several optimizations~\cite{coron2020polynomial,coron2021provably} have been proposed to improve its computational efficiency, reducing the complexity of the second step from exponential to polynomial time. We provide a brief overview in Appendix~\ref{app.cg_attack}. 

\subsection{Complexity Analysis}

The computational bottleneck of the NS attack resides in the second step, as lattice reduction problems are inherently difficult as the dimension $n$ grows. For the standard HSSP, the heuristic running time is dominated by the exponential term $2^{\Omega(N)}$. 

In the context of the HLCP where coefficients are bounded by $c$, the complexity scales as~\cite[Section 6.4.4]{gini2022hardness}
\begin{equation*}
    2^{\Omega(N)} \cdot \log^{\mathcal{O}(1)} c.
\end{equation*}
This indicates that while the complexity is exponential with respect to the number of honest nodes $n$, it only grows polynomially with the bitsize of the coefficient range $c$.

For the first step, the heuristic polynomial complexities for constructing the basis and performing reduction are approximately $O(N^9)$ and $O(N^7(N + \log c)^2)$, respectively.
\red{

\subsection{Noise-tolerant orthogonal-lattice recovery}\label{sec:noisy_step1}

The reformulation in Section~\ref{sec:finite_precision_reduction} recasts the reconstruction problem as a noisy mHSSP/mHLCP instance. In this section, we briefly show how such instances can be addressed heuristically by lifting the exact orthogonality constraints into an augmented lattice that explicitly carries their residuals~\cite{coron2019cryptanalysis,notarnicola2021hidden}. This extension only modifies the lattice construction in Step~1 of the NS attack.

Let
$\tilde{\v{Y}}_r\in\mathbb{Z}_Q^{M\times r}$
denote the submatrix formed by selecting $r$ of the $u$ columns of
$\tilde{\v{Y}}$. 
We can construct the augmented lattice with row basis
\begin{equation*}
\begin{pmatrix}
\lambda\v{I}_M
&
-\tilde{\v{Y}}_r\\
\v{0}
&
Q\v{I}_r
\end{pmatrix}\in \mathbb Z^{(M+r)\times(M+r)},
\end{equation*}
where $\lambda\in\mathbb{Z}^+$ balances the original and residual
coordinates. 

If we multiply this basis by any
integer row vector
$(\v{v}^{\top}\mid\v{w}^{\top})$, where
$\v{v}\in\mathbb{Z}^{M}$ and
$\v{w}\in\mathbb{Z}^{r}$, we will get the lattice vector
\begin{equation}
\left(
\lambda\v{v}^{\top}
\,\middle|\,
-\v{v}^{\top}\tilde{\v{Y}}_r
+
Q\v{w}^{\top}
\right).
\label{eq:noisy_lattice_vector}
\end{equation}
The first $M$ coordinates represent the candidate left-kernel vector $\v v$, while the last $r$ coordinates measure its modular residual with respect to the noisy observations.

Now consider a genuine integer left-kernel vector $\v v$ of $\tilde{\v W}$, i.e.,
$\v v^\top\tilde{\v W}=\v 0^\top$. By
\eqref{eq:noisy_mhssp_relation}, the selected observations satisfy
\begin{equation*}
\v{v}^{\top}\tilde{\v{Y}}_r
\equiv
\v{v}^{\top}\v{E}_r
\pmod Q,
\end{equation*}
Hence, $\v w$ can be chosen so that the last $r$ coordinates of \eqref{eq:noisy_lattice_vector} are the centered representatives of $-\v v^\top\v E_r$. Provided that $Q$ is sufficiently large to avoid modular wraparound, the corresponding lattice vector becomes
\begin{equation}
\left(
\lambda\v v^\top
\,\middle|\,
-\v v^\top\v E_r
\right).
\end{equation}
Therefore, if the numerical residual is small, a short left-kernel vector of $\tilde{\v W}$ gives rise to a short vector in the augmented lattice.

In contrast, for a vector $\v v$ outside the left kernel of $\tilde{\v W}$, the residual generally contains the additional structural term
$\v v^\top\tilde{\v W}\tilde{\v X}_{h,r}$ and is therefore not expected to remain comparably small. The augmented coordinates thus provide a heuristic separation between genuine left-kernel vectors and unrelated lattice vectors. 

Assuming $\tilde{\v W}$ has full column rank, its left kernel has dimension $M-N$. We retain reduced vectors whose residual coordinates are sufficiently small and whose first-block projections provide $M-N$ linearly independent candidates. We then discard the residual coordinates and divide the first $M$ coordinates by $\lambda$. The resulting vectors are passed to the original completion-lattice and BKZ-based Step~2, which remains unchanged.

}

\section{Optimized HSSP attack for Secure Aggregation}\label{sec.filter}

As established in previous sections, reconstructing private updates in DFL can be formulated as solving an mHSSP or mHLCP. But gaps still exist between theoretical solvability and practical execution.

Firstly, theoretical solvers for the HSSP require more observations than the naive condition $M \geq N$ to guarantee unique recovery. The NS attack typically requires $M = 2N$, while the methods of Caron and Gini require $M \approx N^2$. 
In a DFL core sub-graph, $M$ corresponds to the number of colluding corrupted nodes, while $N$ is the number of honest targets. 
Such a condition is not always satisfied in realistic decentralized networks. When $M$ is small, lattice reduction algorithms produce a candidate vector pool that contains the true weight vectors but also many spurious vectors that satisfy the lattice short-vector criteria but not the physical DFL constraints. Therefore, to reduce the spurious solutions, the adversary can apply filters to extract the correct $N$ weight vectors from the candidate set with different protocols.

Secondly, the filtering is primarily derived from the prior knowledge established in Cases 1-3. By verifying whether each candidate combination is consistent with these topological constraints, we can systematically eliminate part of spurious solutions.
Additionally, the inherent mathematical properties of the DFL mixing matrix $\v{W}$ provide rigorous constraints to prune the HSSP result set.

For undirected graphs, $\v{W}$ is typically doubly stochastic. After scaling by $\beta$, the $N$ chosen candidate vectors $\tilde{\v{w}}_j$ must satisfy the exact row-sum identity as $\sum_{j \in \mathcal{V}_h} \tilde{\v{w}}_j = \beta(\v{1} - \v{W}_{cc}\v{1})$. Any combination of $N$ vectors that does not sum to this known constant is eliminated.
In directed graphs, row-stochasticity is not guaranteed. The sum of $N$ candidate vectors is instead relaxed to a bound $\leq c N$. However, if the protocol utilizes auxiliary mass-balance scalars $a_i^{(t)}$ that are transparent (e.g., initialized to 1), their evolution serves as a powerful secondary filter to validate the weights by checking if the reconstructed scalars match the observed evolution.

Note that the recovered solution set $\{\v{a}_1, \dots, \v{a}_N\}$ is inherently unordered. While the topological prior knowledge in Case 1 and 2 can help establish a partial ordering, a unique sequence is not always guaranteed. This lack of ordering leads to a permutation ambiguity in the columns of the reconstructed matrix $\v{W}_{ch}$. 
Thus while the quality of the reconstructed private data $\v{X}_h$ remains high, the identities of the data, i.e., the specific mapping between a reconstructed model update and its corresponding honest node may be shuffled. This means the adversary may not always be certain who contributed which specific update, unless further identity-linking information is available.

Therefore, we build an end-to-end procedure for the proposed attack to circumvent SA, which is formalized in Algorithm~\ref{alg:math_hssp_attack}.

\begin{algorithm}[ht]
  \caption{Optimized HSSP attack for SA}
  \label{alg:math_hssp_attack}
  \begin{algorithmic}[1]
    \State \textbf{Input:} Adversarial observations $\mathcal{O}$
    \State \textbf{Output:} Recovered honest updates $\hat{\v{X}}_h \in \mathbb{R}^{|\mathcal{V}_h| \times u}$ and corresponding training data batch $\mathcal{B}_i$

    
    \State \textbf{Phase 1: Topological Reduction}\Comment{Case 1 and 2}
    \ForAll{$i \in \mathcal{V}_c$}
    \If{$|\mathcal{N}_i \cap \mathcal{V}_h| = 0$}
    \State $\mathcal{V}_c \leftarrow \mathcal{V}_c \setminus \{i\}$ \Comment{Pruning}
    \EndIf
    \If{$|\mathcal{N}_i \cap \mathcal{V}_h| = 1$ with honest neighbor $j$}
    \State Resolve trivial attack for node $j$ via \eqref{eq.trivial_attack}
        \State $\mathcal{V}_c \leftarrow \mathcal{V}_c \setminus \{i\}, \mathcal{V}_h \leftarrow \mathcal{V}_h \setminus \{j\}$ \Comment{Pruning}
    \EndIf
    \EndFor
    \State Build up core sub-graphs and calculate corresponding $\v{Y}$
    
    \vspace{3pt}
    \State \textbf{Phase 2: Orthogonal Lattice Construction}
    \State Map to mHSSP or mHLCP: $\v{H} \equiv 10^{\gamma}\beta\v{Y} \pmod Q$, with unknown coefficients $\v{A} \equiv \beta\v{W}_{ch} \pmod Q$
    \State \textbf{Execute NS attack:}
    \State Extract $\mathcal{L}^{\perp}(\v{A})$ via LLL reduction of $\mathcal{L}_Q^{\perp}(\v{H})$
    \State Compute completion lattice $\bar{\mathcal{L}}(\v{A}) = (\mathcal{L}^{\perp}(\v{A}))^{\perp} \cap \mathbb{Z}^M$
    \State Execute BKZ on $\bar{\mathcal{L}}(\v{A})$ to extract short independent vectors $\mathcal{S}_{cand} \subset \mathbb{Z}^M$
    \vspace{3pt}
    \State \textbf{Phase 3: Filtering}
    \State $\Omega_{valid} \leftarrow \emptyset$
    \ForAll{$\v{P} \in \binom{\mathcal{S}_{cand}}{|\mathcal{V}_h|}$}
        \If{$\v{P}$ violates topology prior and constraints}
            \State \textbf{continue} 
        \EndIf
        \State $\Omega_{valid} \leftarrow \Omega_{valid} \cup \{\beta^{-1}\v{P}\}$
    \EndFor
    \vspace{3pt}
    \State \textbf{Phase 4: State Inversion}
    \For{$\hat{\v{W}}_{ch} \in \Omega_{valid}$}
        \State $\hat{\v{X}}_h^{(t+1/2)} \leftarrow (\hat{\v{W}}_{ch})^\dagger \v{Y}$ 
        \State Use downstream GIA to recover training data
    \EndFor
  \end{algorithmic}
\end{algorithm}

\section{Numerical Results}\label{sec.exp}
In this section, we present numerical evaluations to demonstrate the effectiveness of the proposed attack and its practical implications in DFL.~\footnote{The code is available at: \url{https://github.com/Wenrui-Yu/hssp_dfl_sa}} The experimental setup is given in Appendix~\ref{ssec.setup}.

\subsection{Evaluation Metrics}\label{ssec.metrics}

Let $\mathcal{S}_{gt}$ be the set of ground-truth (GT) column vectors constituting the hidden weight matrix $\tilde{\v{W}}_{ch} \in \mathbb{Z}^{M \times N}$ and $\mathcal{S}_{cand}$ be the set of candidate vectors recovered via lattice reduction.

We define the success of the SA breach through the following metrics:

\noindent\textbf{Recall: }
 An attack achieves a primary success if the GT set is a subset of the candidate set, i.e., $\mathcal{S}_{gt} \subseteq \mathcal{S}_{cand}$.
This indicates that the lattice reduction has successfully captured all necessary basis vectors of the hidden weights.

\noindent\textbf{Solution Set Cardinality (Ambiguity): }
 We define the set of feasible solutions (filtered with different cases) as
    \begin{equation*}
        \mathcal{F} = \{ \mathcal{P} \subset \mathcal{S}_{cand} : |\mathcal{P}| = N \text{ and } \mathcal{P} \text{ satisfies } \mathcal{O} \},
    \end{equation*}
    where $\mathcal{P}$ is a combination of the vectors in $\mathcal{S}_{cand}$. The attack potency is measured by the cardinality $|\mathcal{F}|$. If the attack succeeds and $|\mathcal{F}| = 1$, the adversary has uniquely identified the true matrix. 

Since the recovered solution set is inherently unordered, so identifying the correct combination of vectors, regardless of their sequence, is regarded as a successful reconstruction. This permutation leads to an identity ambiguity among honest nodes but does not compromise the fidelity of the reconstructed data itself.

Once a candidate matrix $\hat{\v{W}}_{ch}$ is selected, the adversary reconstructs the honest updates $\hat{\v{X}}_h$ via \eqref{eq.sub_problem} and subsequently the raw data $\hat{\v{D}}_h$. We utilize common evaluation metrics to evaluate the reconstruction quality, e.g., \textit{Peak Signal-to-Noise Ratio (PSNR)} and \textit{SSIM}~\cite{ssim} to evaluate the structural and visual similarity of reconstructed images, \textit{Mean Square Error (MSE)} for the feature accuracy of tabular data, and \textit{Cosine Similarity} over sentence embeddings alongside \textit{ROUGE-L}~\cite{lin2004rouge} and \textit{BERTScore}~\cite{bert-score} to quantify the semantic and structural alignment of recovered text sequences.

We first present the attack's performance in recovering the weight matrix $\v{W}_{ch}$ under various network topologies, which focuses on the fundamental solvability of the mHSSP/mHLCP formulation. We then apply the recovered gradients to reconstruct private training samples. 
The primary objective of the HSSP-based attacks is the accurate inference of the weights $\v{W}_{ch}$ and then correspondingly disentangle the mixed gradients. Thus, the quality of the final data reconstruction depends predominantly on the performance of the downstream gradient inversion attacks rather than the HSSP process itself. The data reconstruction results serve to validate the downstream utility of our recovered gradients.

\subsection{Topological Simplification}\label{exp.graph}

We first empirically explore the graph simplification logic discussed in Section~\ref{ssec.graph_sim}. Recall that honest nodes are categorized into three classes: those directly recoverable via trivial attacks, those constituting the core sub-graph susceptible to HSSP-based inference, and those remaining unrecoverable within a single iteration. We conducted a statistical analysis across a large ensemble of randomly generated undirected graphs and strongly connected directed graphs with fixed node counts $n$ and edge counts $e$. The proportions of each node category, averaged over $100$ independent trials, are illustrated in Figure~\ref{fig:undirected_recovery} and Figure~\ref{fig:directed_recovery} in Appendix~\ref{app.table}.

\vspace{2pt}
\begin{mdframed}
Key Takeaway: Topological sparsity and the corruption ratio $\eta$ are the primary catalysts for privacy leakage. Specifically, sparse graphs lack the path redundancy necessary to mask honest updates at low $\eta$.
\end{mdframed}
\vspace{2pt}

For undirected graphs, the results indicate that as the corruption ratio $\eta$ increases, the proportion of honest nodes recoverable via trivial attacks rises consistently. This trend occurs because a higher density of corrupted nodes more easily creates "isolation" for honest nodes. Also, within the $\eta = 0.4 \sim 0.6$ interval, the proportion of unrecoverable nodes undergoes a precipitous decline. This sharp transition signifies the point where the adversary set $\mathcal{V}_c$ begins to outnumber the honest set $\mathcal{V}_h$ in the core sub-graphs, frequently satisfying the necessity $|\mathcal{V}_c| \geq |\mathcal{V}_h|$ for reconstruction.

For directed graphs, the vulnerability is heavily influenced by the sparsity of the topology. In sparse directed configurations, as shown in Figure~\ref{fig:directed_recovery}(a), honest nodes are highly susceptible to isolation even at low corruption ratios. Since directional communication requires specific in-flow paths that are less redundant in sparse configurations, a significant portion of honest nodes becomes either trivial-attack targets or HSSP-eligible even when $\eta$ is small. Conversely, as the graph becomes denser, the trend aligns more closely with that of undirected graphs, as seen in Figure~\ref{fig:directed_recovery}(d), which the unrecoverable node proportion remains high until the $\eta = 0.4 \sim 0.6$ threshold is reached.

\begin{figure}[ht]
    \centering
    \begin{minipage}{0.23\textwidth}
        \centering
        \includegraphics[width=\linewidth]{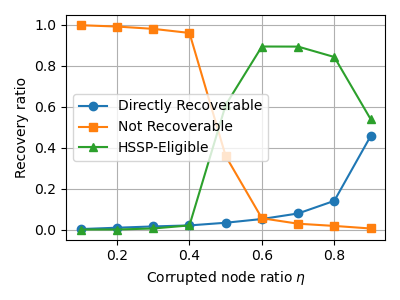}
        \\[-1ex]{\small (a) $n=20, e=40$}
    \end{minipage}
    \begin{minipage}{0.23\textwidth}
        \centering
        \includegraphics[width=\linewidth]{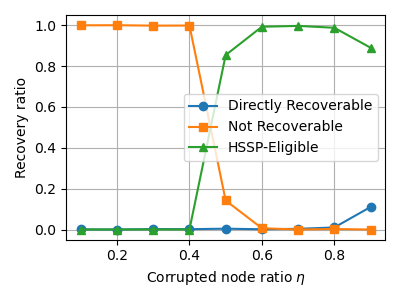}
        \\[-1ex]{\small (b) $n=20, e=60$}
    \end{minipage}
    \caption{
    Topological vulnerability analysis in undirected graphs as a function of the corrupted node ratio $\eta$. The total proportion of honest nodes is partitioned into three mutually exclusive categories: (i) nodes directly recoverable via trivial attacks (blue line), (ii) nodes constituting the core sub-graph targeted for HSSP-based inference (green line), and (iii) inaccessible nodes that remain hidden within a single iteration (orange line). Note that these three categories are exhaustive, with their sum normalized to 1.}
    \label{fig:undirected_recovery}
\end{figure}

\red{Appendix~\ref{exp.large_graph} examines how the sizes of the core subgraphs produced by the simplification strategy scale with the size of the underlying network. These results inform the choice of subgraph sizes in the subsequent experiments. In addition, to assess the dependence of attack feasibility on network topology, we also additionally evaluate several other representative graphs.}

 \subsection{Attack Performance on Undirected Graph}\label{exp.undirected}

Following the topological simplification discussed in the previous subsection, we evaluate the effectiveness of the mHSSP/mHLCP attacks on the resulting core sub-graphs. To ensure a fair comparison, we randomly generate core sub-graphs with a fixed number of honest nodes and corrupted nodes, specifically targeting those identified as potentially resolvable through lattice-based inference. 

\vspace{2pt}
\begin{mdframed}
Key Takeaway: Ambiguity in the recovered solution set is significantly mitigated by topological prior knowledge (Cases 1–3) and adversarial density.
\end{mdframed}
\vspace{2pt}

To evaluate the robustness and efficiency of the proposed attack, we conducted $100$ independent trials for different parameter configurations, as shown in Table~\ref{tab.large_hlcp_summary}. In each trial, a new core subgraph was randomly generated, and we recorded the successful inclusion of all GT vectors. The results demonstrate a high vector inclusion rate, confirming that the recovered candidate pool consistently contains the complete set of true weights. The cardinality of the candidate vector pool grows significantly as the ratio of honest nodes increases. In sparse-observation settings, lattice reduction produces a larger number of spurious short vectors.
\begin{table}[ht]
\centering
\begin{tabular}{l c c }
\hline
Config & Recall (\%) & Avg. Found Vec. \\
\hline
$n=10, e=20, \eta=0.6$ & 98.0\% & 1497.1 \\
$n=10, e=20, \eta=0.7$ & 92.0\% & 88.0  \\
$n=10, e=20, \eta=0.8$ &100.0\% & 17.7   \\
$n=20, e=40, \eta=0.7$ & 72.0\% & 4052.7 \\
$n=20, e=40, \eta=0.8$ & 97.0\% & 328.4 \\
\hline
\end{tabular}
\caption{mHLCP attack performance with different configurations in undirected graphs.}
\label{tab.large_hlcp_summary}
\end{table}

Table~\ref{tab.hlcp_6} and Table~\ref{tab.hlcp_7} of Appendix~\ref{app.table} provide a granular look at the results for the first 10 subgraphs under the configuration $n=10, e=20, \eta=0.6$ and a higher corruption ratio ($\eta=0.7$), showcasing the impact of the filtering process. We observe that the precision of the final reconstruction is heavily contingent on the level of topological awareness (Cases 1-3). Specifically, the availability of neighbor identities (Case 1) or node degrees (Case 2) provides powerful structural constraints that effectively prune spurious vector combinations. This reduction in the candidate set size significantly mitigates solution ambiguity.
Also, a higher density of adversarial nodes grants the attacker more heterogeneous and asymmetric observational viewpoints within the network. This increase in independent constraints significantly narrows the number of valid combinations, often leading to the identification of a unique, correct weight matrix in Table~\ref{tab.hlcp_7}.

For the mHSSP scenario involving uniform weights in undirected graphs, we set the edge weights to $W_{ij} = \frac{1}{d_{\max}+1}$ for $(i,j)\in\mathcal{E}$ and the self-weights to $W_{ii} = 1 - \sum_{j\in\mathcal{N}_i} w_{ij}$ for $i\in\mathcal{V}$, where $d_{\max}$ is the maximum degree of the whole graph. The corresponding results are detailed in Tables~\ref{tab.large_hssp_summary}, \ref{tab.hssp_6} and \ref{tab.hssp_7} in the Appendix~\ref{app.table}. Case 1 is not applicable to the mHSSP formulation. If the adversary possesses exact knowledge of neighbor identities and the protocol employs uniform weights, the subset-sum problem becomes trivial, as the weights for each neighbor are fixed and known beforehand, leaving no hidden parameters to resolve.

The results on directed graphs are presented in Appendix~\ref{exp.directed}.

\subsection{Attack on Real Dataset}\label{exp.true_dataset}

\vspace{2pt}
\begin{mdframed}
Key Takeaway: Information leakage occurs even in the absence of a unique solution, as some spurious candidates still leak recognizable semantic information.
\end{mdframed}
\vspace{2pt}

Figure~\ref{fig:img_re} illustrates representative reconstruction results for the image dataset, showcasing both the GT solution (specifically, the 24th entry in Case 1) and instances of spurious solutions recovered under Cases 1-3. Detailed results for tabular and text modalities, including the set of 30 random candidate reconstructions for each case, are provided in Appendix~\ref{app.complete}. All experiments were conducted on an undirected graph with non-uniform mixing weights (mHLCP) under the configuration $n=10, e=20, \eta=0.6$.
For each candidate solution identified by the HSSP solver, we first resolve the corresponding latent model updates via~\eqref{eq.sub_problem} and subsequently perform gradient inversion to recover the raw data. As observed in Figure~\ref{fig:img_re}, while the GT solution yields a better reconstruction fidelity, several spurious candidates also produce outputs that closely resemble the original training samples. This highlights that, even in the absence of a unique solution, an adversary can still obtain approximations of private data that retain significant semantic information.

\begin{figure}[h]
  \centering
  \includegraphics[width=0.94\linewidth]{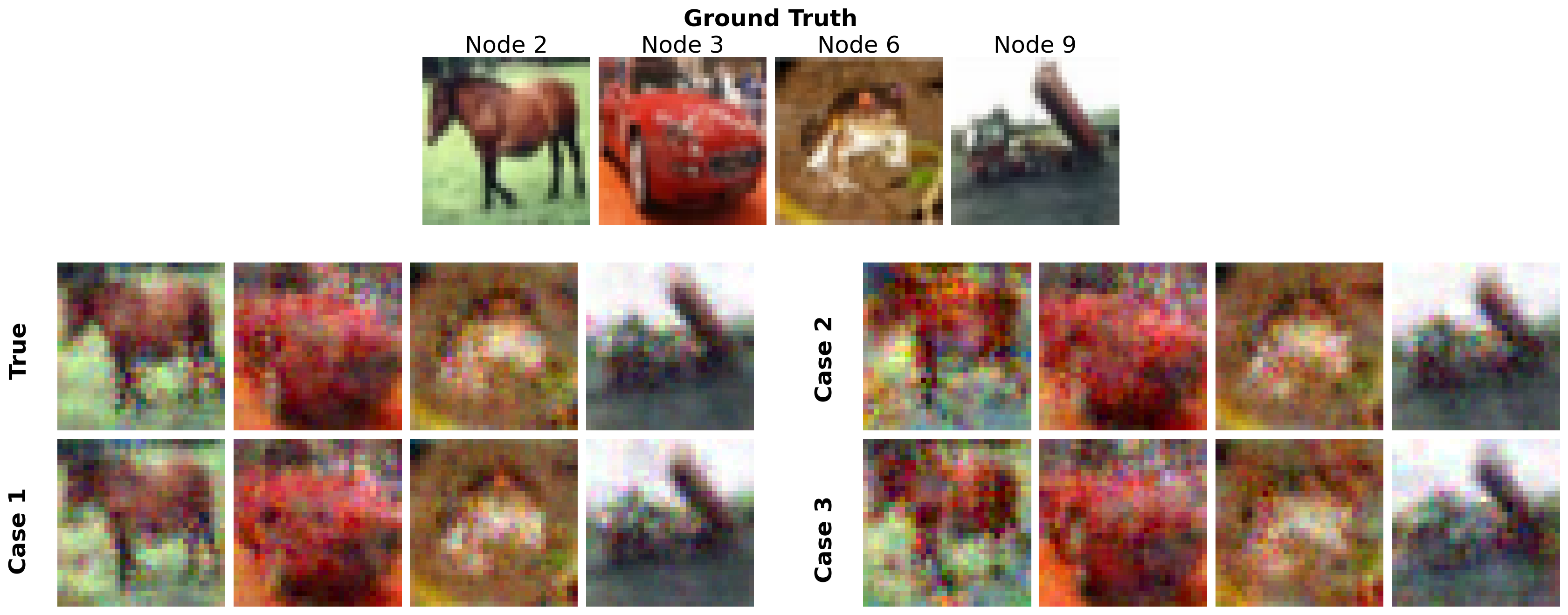}
  \caption{Visual comparison of reconstructed images derived from the GT solution and representative spurious candidates across Cases 1-3.}
  \Description{}
  \label{fig:img_re}
\end{figure}

The final reconstruction quality also depends on the effectiveness of the downstream inversion algorithm. In settings where gradient inversion is theoretically lossless, such as in logistic regression models for text data, the private input vectors can be perfectly recovered. For example, a representative reconstruction in Table~\ref{tab:sent_recon} achieves a cosine similarity of 1.0. However, inherent limitations in the subsequent conversion of embeddings back to natural language via \textit{vec2text} still introduce semantic distortions.
\begin{table*}[t]
\small
\begin{tabular}{cccp{8cm}cc}
\hline
& Node & Cosine Similarity & Recovered Text & Rouge-L & BERTScore \\
\hline
\multirow{4}{*}{Ground Truth} 
& 2 & - & Crunch berries! im tired. Who wants to do something tomorrow? & -& -\\
& 3 & - & @LAmale really? They're a band... good music! & - & -\\
& 6 & - & Where are all the banana slugs? & - & -\\
& 9 & - & @Jack\_thm aww why not?! Heck they do the job! I can't find those anywhere nemore either!! ur ipod ones hurt!!! & -& -\\
\hline
\multirow{4}{*}{Spurious solution} 
& 2 & 0.714 & \textcolor{diffred}{@Lammahey they're a good band!} & 0.000&-0.346 \\
& 3 & 0.956 & \textcolor{diffred}{@Lamem} \textcolor{matchgreen}{They're a} \textcolor{diffred}{good} \textcolor{matchgreen}{band}\textcolor{diffred}{!} & 0.571&0.255 \\
& 6 & \textbf{1.000} & \textcolor{matchgreen}{Where are all the banana slugs?} &\textbf{1.000} &\textbf{1.000}\\
& 9 & \textbf{1.000} & \textcolor{diffred}{mmmm Why} \textcolor{matchgreen}{can't} \textcolor{diffred}{anyone} \textcolor{matchgreen}{find} \textcolor{diffred}{them They} \textcolor{matchgreen}{hurt} & 0.071&-0.031\\
\hline

\multirow{4}{*}{True solution} 
& 2 & \textbf{1.000} & \textcolor{matchgreen}{Crunch berries!} \textcolor{diffred}{i am} \textcolor{matchgreen}{tired. Who wants to do something tomorrow?} & 0.947 &0.811\\
& 3 & \textbf{1.000} & \textcolor{diffred}{@Lameah...} \textcolor{matchgreen}{They're a} \textcolor{diffred}{really} \textcolor{matchgreen}{good} \textcolor{diffred}{band!} \textcolor{matchgreen}{music} & 0.625 &-0.474\\
& 6 & \textbf{1.000} & \textcolor{matchgreen}{Where are all the banana slugs?} &\textbf{1.000} & \textbf{1.000} \\
& 9 & \textbf{1.000} & \textcolor{diffred}{mmmm Why} \textcolor{matchgreen}{can't} \textcolor{diffred}{you} \textcolor{matchgreen}{find} \textcolor{diffred}{them} & 0.276 &-0.499\\
\hline
\end{tabular}
\caption{Example of text reconstruction results from candidate weight matrices. Exact matches and matching substrings are highlighted in \textcolor{matchgreen}{green}, while mismatched errors are highlighted in \textcolor{diffred}{red}.}
\label{tab:sent_recon}
\end{table*}

The statistical results are illustrated in Figure~\ref{fig:stat_cifar}. For the true solution, the reconstructed gradients are identical to the ground truth (with a negligible machine epsilon of $10^{-32}$).
In contrast, the average MSE for the sampled spurious candidates is significantly higher, leading to a corresponding decrease in image reconstruction quality. These statistical metrics exhibit high variance, reflecting the significant heterogeneity in reconstruction fidelity across the candidate set. While some lead to evident reconstruction failures, a substantial portion still yields high-quality approximations where semantic information remains clearly discernible.

\begin{figure}[h]
  \centering
  \includegraphics[width=0.86\linewidth]{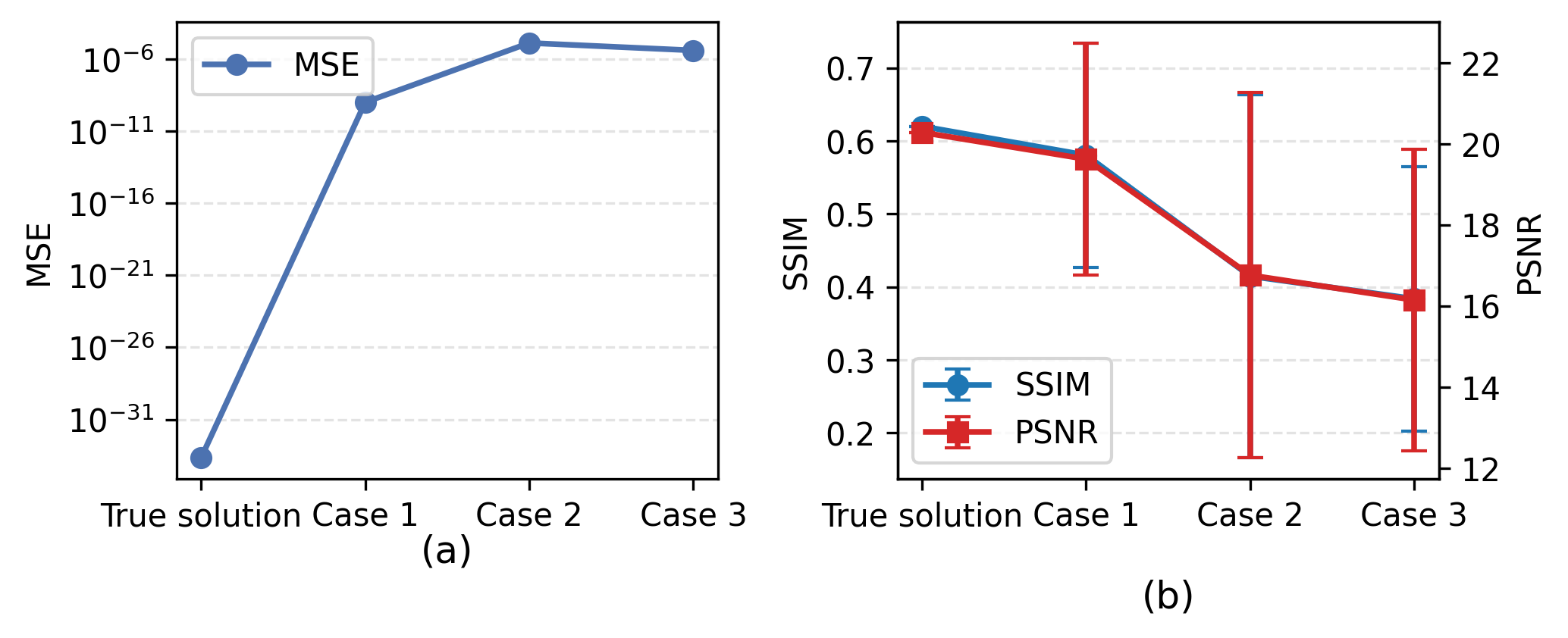}
  \caption{Statistical reconstruction performance on the CIFAR-10 dataset. (a) MSE between the HSSP-reconstructed gradients and the GT gradients. (b) Distribution of SSIM and PSNR scores for training data reconstructed from the sampled candidate gradients.}
  \Description{}
  \label{fig:stat_cifar}
\end{figure}

\red{
\subsection{Robustness to Finite-Precision Errors}\label{sec:finite_precision_experiments}

We also evaluate whether numerical errors introduced by integerization
degrade the exact-arithmetic attack. All other experimental settings are
kept the same as in Section~\ref{exp.true_dataset}; only the observation and
the Step~1 construction are varied. 

For clarity and to facilitate comparison with the idealized setting,
the original values are stored in float64 format and quantized to ten decimal places before integerization. Based on that, we conduct the
finite-precision evaluation by applying truncation with $\gamma\in\{4,6,8,10\}$ and vary the
number of selected observation columns as $r\in\{1,2,4,8\}$ in noisy mHLCP. For baseline, we apply the exact Step~1 to the exact ideal observation using $r=4$. We heuristically set
$\lambda=\lceil 1.25(\beta+1)\sqrt{r}\rceil$.
The factor $\sqrt{r}$ accounts for the growth of the Euclidean norm
across the $r$ residual coordinates.

\vspace{2pt}
\begin{mdframed}
Key Takeaway: When sufficient numerical precision is retained, the noisy mHSSP/HLCP attack is robust to finite-precision truncation.
\end{mdframed}
\vspace{2pt}

Table~\ref{tab:truncation-single-trial} reports the recovery results
under finite-precision truncation. The integerization precision
$\gamma$ has a clear impact on recovery: a smaller $\gamma$ retains
fewer decimal places, producing a coarser fixed-point representation
and a larger truncation error in the original numerical domain.
In our experimental setting, when $\gamma=8$, selecting $r=4$, as in
the exact mHLCP control, is sufficient to tolerate the resulting
numerical error for all tested datasets. Both the column recall and the
number of retained candidate vectors match those of the exact control,
indicating that the attack effectiveness is preserved at this level of
finite-precision truncation.

\begin{table*}[ht]
\centering
\begin{tabular}{llccccc}
\toprule
Dataset & & & $r=1$ & $r=2$ & $r=4$ & $r=8$ \\
\midrule
\multirow{5}{*}{Sentiment140} & Baseline & \textbf{1.00 / 1152} & \text{--} & \text{--} & \text{--} & \text{--} \\
& $S=10^{4}$ & \text{--} & 0.50 / 4096 & 0.50 / 4096 & 0.50 / 4096 & 0.50 / 4096 \\
& $S=10^{6}$ & \text{--} & 0.00 / 1068 & \textbf{1.00 / 1152} & \textbf{1.00 / 1152} & \textbf{1.00 / 1152} \\
& $S=10^{8}$ & \text{--} & \textbf{1.00 / 1152} & \textbf{1.00 / 1152} & \textbf{1.00 / 1152} & \textbf{1.00 / 1152} \\
& $S=10^{10}$ & \text{--} & \textbf{1.00 / 1152} & \textbf{1.00 / 1152} & \textbf{1.00 / 1152} & \textbf{1.00 / 1152} \\
\addlinespace
\multirow{5}{*}{Purchase} & Baseline & \textbf{1.00 / 1152} & \text{--} & \text{--} & \text{--} & \text{--} \\
& $S=10^{4}$ & \text{--} & 0.50 / 4096 & 0.50 / 4096 & 0.50 / 4096 & 0.50 / 4096 \\
& $S=10^{6}$ & \text{--} & 0.50 / 4096 & 0.50 / 4096 & \textbf{1.00 / 1152} & 0.25 / 1776 \\
& $S=10^{8}$ & \text{--} & 0.50 / 4096 & 0.50 / 4096 & \textbf{1.00 / 1152} & \textbf{1.00 / 1152} \\
& $S=10^{10}$ & \text{--} & 0.50 / 4096 & 0.50 / 4096 & \textbf{1.00 / 1152} & \textbf{1.00 / 1152} \\
\addlinespace
\multirow{5}{*}{CIFAR} & Baseline & \textbf{1.00 / 1152} & \text{--} & \text{--} & \text{--} & \text{--} \\
& $S=10^{4}$ & \text{--} & 0.50 / 4096 & 0.50 / 4096 & 0.50 / 4096 & 0.50 / 4096 \\
& $S=10^{6}$ & \text{--} & 0.50 / 4096 & 0.25 / 2752 & 0.00 / 1776 & 0.50 / 4096 \\
& $S=10^{8}$ & \text{--} & 0.50 / 4096 & \textbf{1.00 / 1152} & \textbf{1.00 / 1152} & \textbf{1.00 / 1152} \\
& $S=10^{10}$ & \text{--} & 0.50 / 4096 & \textbf{1.00 / 1152} & \textbf{1.00 / 1152} & \textbf{1.00 / 1152} \\
\bottomrule
\end{tabular}
\caption{\red{Robustness to finite-precision errors via noisy mHSSP/mHLCP. Each entry is reported as
$\text{Recall}/\text{Found Vec.}$}}
\label{tab:truncation-single-trial}
\end{table*}

}

\subsection{Defense Mechanisms}

\vspace{2pt}
\begin{mdframed}
Key Takeaway: Lattice-based attacks exhibit algebraic sensitivity to aggregation-level DP while local DP does not perturb the global structure.
\end{mdframed}
\vspace{2pt}

To evaluate the robustness of our HSSP-based attack, we further investigate the impact of DP mechanisms, as DP is commonly adopted to provide privacy guarantees and can be combined with SA~\cite{stevens2022efficient}. DP can typically be applied in two paradigms: Local DP (LDP), where noise is injected into individual gradients before aggregation, and Aggregation-level DP (analogous to Centralized DP, or CDP), where noise is added to the final aggregated result.

\red{
When noise is applied to the aggregated sum, the observation matrix $\hat{\v{H}}$ available to the adversary is perturbed by noise $\v{E}$. The mHSSP relation in \eqref{eq.mhssp} becomes $\hat{\v{H}} \equiv \v{A}\v{X} + \v{E} \pmod Q$, which takes the similar noisy modular form as in Section~\ref{sec:finite_precision_reduction}. 
Conversely, when nodes apply LDP, the noise matrix $\v{E}$ is added directly to the local parameters $\v{X}$ prior to the SA protocol. The observed aggregated system becomes $\hat{\v{H}} \equiv \v{A}(\v{X} + \v{E}) \pmod Q$.
Let $\hat{\v{X}} = \v{X} + \v{E}$ represent the perturbed local updates. The system can be rewritten as $\hat{\v{H}} \equiv \v{A}\hat{\v{X}} \pmod Q$. This preserves the exact structural invariant of the mHSSP/mHLCP formulation. 
Therefore, LDP does not directly disrupt the algebraic structure exploited by the lattice reduction algorithm, which can still recover $\v{A}$ under the considered settings. In this scenario, the protection of the raw data primarily depends on the magnitude of the LDP noise $\v{E}$, while the structural protection provided by SA against recovering the mixing coefficients remains limited.
}

To empirically validate this algebraic brittleness, we distribute $5000$ training images to each of the $10$ nodes. We evaluate the model's utility degradation and the attack's success rate when applying LDP and CDP during training. The DP parameters are configured with a clipping norm of $1.0$, $\delta = 10^{-5}$, and varying privacy budgets $\epsilon$s.
As illustrated in Figure~\ref{fig:dp}, in the CDP scenario, aggregate-level unknown perturbation disrupts the exact algebraic relation required by our lattice attack. However, in the LDP scenario, the injected noise presents no structural obstacle to the solver; while model utility predictably declines as the noise scale increases, the underlying weights and gradients remain resolvable. These results demonstrate that aggregation-level perturbations are effective under our exact-arithmetic attack model.
\begin{figure}[h]
  \centering
  \includegraphics[width=0.93\linewidth]{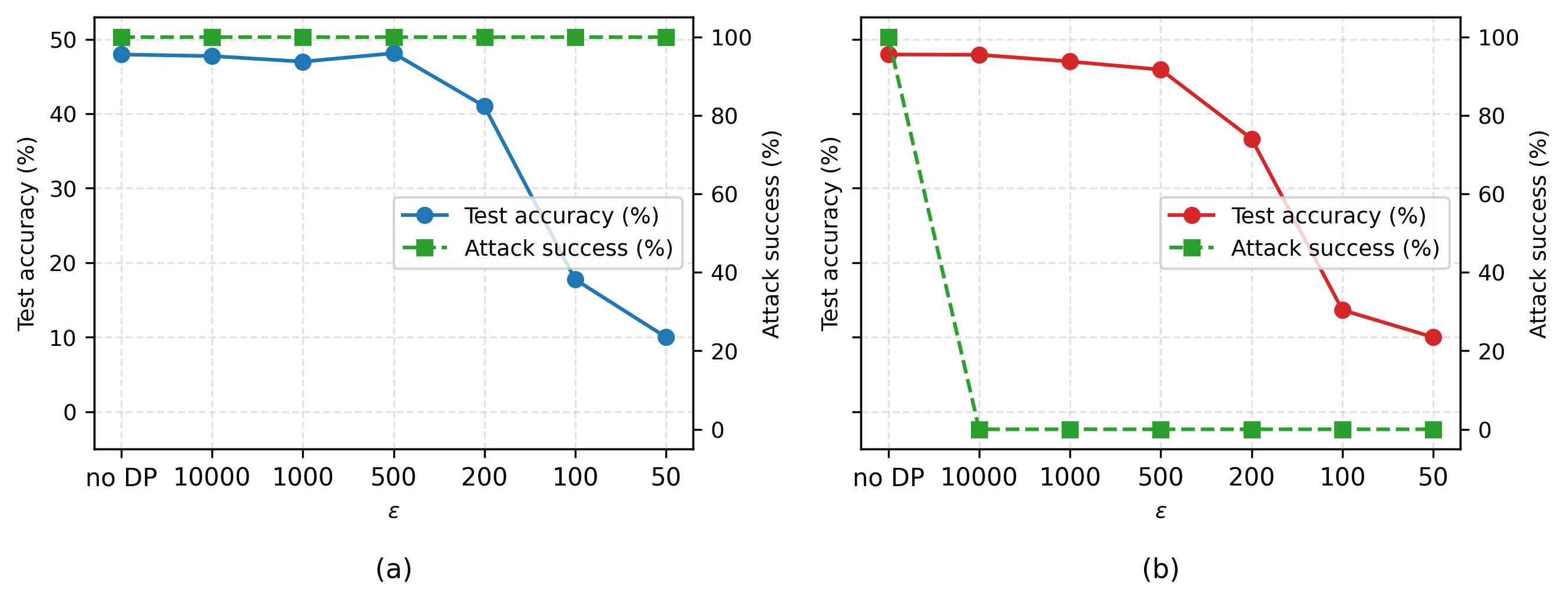}
  \caption{Test accuracy and HSSP attack success rate under different DP paradigms. (a) LDP and (b) CDP.}
  \Description{}
  \label{fig:dp}
\end{figure}

Moreover, although the noisy mHSSP/mHLCP construction is robust to bounded truncation errors as shown in Section~\ref{sec:finite_precision_experiments}, CDP remains effective against full reconstruction under our current experimental setting. We replace the original exact Step~1 with its noise-tolerant variant while leaving the remaining attack pipeline unchanged. As shown in Figure~\ref{fig:aggregation_dp_noisy} in Appendix~\ref{app.table}, the exact Step~1 recovers no ground-truth vector, whereas the noisy attack finds $25-50\%$ true vectors under CDP. This partial recovery is insufficient to complete the attack, but it shows that the noisy construction can retain partial structural information that is entirely lost in the exact formulation. We regard this as a preliminary result, suggesting that the interaction between DP perturbation and more advanced noise-tolerant lattice attacks warrants further investigation.

\section{Related Works}\label{sec.related}

\begin{table*}[ht]
\centering
\small
\begin{tabularx}{\textwidth}{l c c c l l  X}
\toprule
\textbf{Work} & \textbf{Perspective}& \textbf{Setting} & \textbf{SA} & \textbf{Weight} & \textbf{Rounds}  & \textbf{Key Mechanism \& Insight} \\ 
\midrule
Pasquini \cite{pasquini2023security} & attack & DFL & partial & known & multi  & Exploits inherent leakage in DFL caused by local generalization. \\ 
\hline
Mrini  \cite{mrini2024privacy} & attack & DFL & no  & known & multi  & Leverages the quasi-static nature of gradients across iterations. \\ 
\hline
Dekker \cite{dekker2025topology} & attack & \makecell[c]{summation} & yes & known\footnotemark & multi  & Resolves secure summation via repeated asynchronous observations. \\ 
\hline
So \cite{so2023securing} & defense & CFL & yes & known & multi  & Constrains user participation combinations to prevent leakage. \\ 
\hline
Pasquini \cite{pasquini2022eluding} & attack& CFL & yes & known & one  & Exploits active server control to inject non-identical models. \\ 
\hline
Zhang \cite{zhang2026information} & defense & \makecell[c]{DFL} & yes & known & one  & Establishes perfect security bounds for simple sums in fully connected DFL. \\
\hline
Lam \cite{lam2021gradient} & attack & CFL & yes & known & multi & Reconstructs binary participant matrix via side-channel analytics to disaggregate updates. \\
\hline
Boenisch \cite{boenisch2023reconstructing} & attack & CFL & yes & known & one & Injects malicious devices to nullify SA masking. \\
\hline
Marchand \cite{marchand2023sratta} & attack & CFL & yes & known & multi & Exploits neuron activation sparsity and discrete data priors to disentangle aggregated updates. \\
\hline
Ours & attack & DFL & yes & unknown & one  & Exploits structural asymmetry to resolve unknown weights via mHLCP. \\
\bottomrule
\end{tabularx}
\caption{Comparison with related works.}
\label{tab:related_works}
\end{table*}

SA has been extensively studied in centralized FL to ensure that the central server only observes the aggregate of client updates, thereby mitigating individual privacy leakage \cite{bonawitz2017practical, bell2020secure, so2022lightsecagg} to some extent. This paradigm has naturally been extended to decentralized networks. For instance, several studies adapt distributed protocols to achieve efficient secure summation for DFL \cite{zec2024efficient, tjell2020private}, often leveraging Secure Multi-Party Computation to safeguard single-node contributions \cite{danner2018robust, kanagavelu2020two, tran2021efficient}.

However, a distinction must be made regarding the communication topology of these protocols. Certain existing works \cite{pereira2024secure, jeon2021privacy, brunetta2021non} implement SA in peer-to-peer environments but remain fundamentally predicated on the centralized FedAvg logic.
These designs require each node to broadcast information to all other participants to achieve global synchronization. This deviates from the mainstream DFL paradigm, which prioritizes localized communication and single-hop aggregation exclusively with direct neighbors. Since our study focuses on the privacy vulnerabilities inherent in sparse, neighborhood-based aggregation models, these global-broadcasting approaches fall outside our immediate scope.

Meanwhile, several related works aim to recover individual private information from CFL or DFL aggregation processes. A summary is provided in Table~\ref{tab:related_works}. In addition, we discuss the intrinsic connections between our approach and prior works such as \cite{mrini2024privacy} and \cite{dekker2025topology} in Appendix~\ref{ssec.similar_works}.

\footnotetext{For works employing uniform weights, the weights are treated as a priori known.}

\section{Limitations}
\label{sec:scope_limitations}

The current attack remains subject to several limitations. Its strongest exact-update data-reconstruction result applies to the first training round, where the common initialization is known. Moreover, the basic identifiability condition requires that the number of independent corrupted observations in the core be at least as large as the number of unknown honest states. 
The attack also relies on sufficient prior topology information to filter spurious lattice candidates. 
The purpose of this work is not to suggest that every DFL deployment satisfies the demonstrated operating conditions, but to identify a topology-induced privacy risk that can arise in neighborhood-based SA. As the corresponding theory and attack techniques develop, operating regimes that are currently difficult may become more accessible, motivating early consideration of these structural risks in protocol design.

\section{Conclusion}\label{sec.conclusion}

In this paper, we present a systematic investigation of privacy vulnerabilities in SA within DFL environments. By formulating the privacy reconstruction problem under SA as instances of mHSSP and mHLCP, we show that structural asymmetries in DFL topologies can render these otherwise challenging problems tractable. Empirical results across multiple data modalities further demonstrate that colluding adversaries can recover individual updates with high fidelity, and consequently, reconstruct the underlying training data. These results show that, in asymmetric topologies, SA alone may not provide sufficient structural protection against inference attacks, as individual contributions can be disentangled under favorable conditions. This highlights a gap between the intended privacy guarantees of SA and its behavior in decentralized settings.

\begin{acks}
This paper was edited for grammar using Gemini and ChatGPT.
We thank the Aalborg University AI:X initiative for enabling this work via the AI:SECURITY lab.
\end{acks}

\bibliographystyle{ACM-Reference-Format}
\bibliography{ref}

\clearpage
\appendix
\onecolumn

\section{Push-Sum Algorithm in Directed Graphs}\label{app.pushsum}
For scenarios involving directed graphs, $\v{W}^{(t)}$ is typically required to be column-stochastic (i.e., $\v{1}^\top \v{W}^{(t)} = \v{1}^\top$).
Specifically, for Push-Sum algorithm, each node $i$ maintains two variables, $\v{x}_i^{(t)}$ and an auxiliary scalar $a_i^{(t)}\in\mathbb{R}^+$. The actual consensus model at node $i$ is then computed by the ratio $\tilde{\v{x}}_i^{(t)} = \frac{\v{x}_i^{(t)}}{a_i^{(t)}}$ to correct the bias induced by the asymmetry of the graph.
This pair can be interpreted as an augmented state $\v{z}_i^{(t)} = [\v{x}_i^{(t)\top}, a_i^{(t)}]^\top \in \mathbb{R}^{d+1}$, where both the model parameters and the auxiliary scalar evolve under the same mixing dynamics, i.e., $\v{Z}^{(t+1)} = \v{W}^{(t+1)} \v{Z}^{(t+\frac{1}{2})}$.~\footnote{The convergence target of Push-Sum is determined by the initialization of the auxiliary scalar $a_i^{(0)}$. In existing literature \cite{assran2019stochastic, li2025asymmetrically}, the auxiliary scalar is typically initialized as $a_i^{(0)} = 1$ for all $i \in \mathcal{V}$, leading to a standard arithmetic average. Adopting a non-uniform initialization for $a_i^{(0)}$ will enable the network to reach a weighted consensus.}
Consequently, regardless of whether a standard consensus or a Push-Sum-based mechanism is employed, the fundamental update paradigm remains structurally consistent with the linear mixing form in \eqref{eq.aggr}. For the sake of brevity and clarity, we will adopt this unified representation in the subsequent sections of this paper to describe the general decentralized aggregation process.~\footnote{if $a_i^{(0)}$ is fixed to a non-encrypted value (e.g., $1$), the evolution of $a_i^{(t)}$ becomes a deterministic trajectory governed solely by the graph topology. This provides an additional algebraic constraint that can be utilized to prune the solution space or filter out legitimate candidates in the HSSP-based reconstruction attack, which is detailed in Section~\ref{sec.filter} and Appendix~\ref{exp.directed}.}

\section{Prior Knowledge}\label{app.cases}
\textbf{Case 1} is common in decentralized peer-to-peer systems. 
To implement SA protocols based on pairwise masking~\cite{bonawitz2017practical}, nodes often perform a Diffie-Hellman (DH) key agreement~\cite{diffie2022new} or similar handshakes. In a DFL graph, this requires nodes to obtain network-level identifiers (e.g., IP addresses, node IDs, or public keys) of their neighbors to establish shared secrets. While DH does not require explicit identity disclosure, practical implementations involving identity-bound certificates typically expose the identities of communication partners. 
\textbf{Case 2} corresponds to protocols utilizing anonymizing relays or secure hardware like Trusted Execution Environments as intermediate aggregators. While these primitives can obscure the source of individual updates, the recipient often observes metadata regarding the incoming encrypted contributions. For instance, threshold-based Shamir secret-sharing schemes~\cite{shamir1979share} require the system to verify that a sufficient number of shares have been received, thereby inherently leaking the in-degree $d_i^{in}$.
Finally, \textbf{Case 3} represents an idealized setting where both identities and structural metadata are strictly hidden, e.g., through fully anonymous communication or blind homomorphic aggregation. In this case, the adversary only observes the final aggregated state $\v{x}_i^{(t+1)}$ without reliable information about the provenance or the size of the participating set.

\section{Proof of Proposition~\ref{prop:isolation}}\label{app.proof_isolation}
\begin{proof}
By stacking the aggregation process across $t_{\max}$ iterations, the global model evolution can be initially characterized as a block-diagonal linear system:
\begin{align}\label{eq.wd_vec}
\begin{pmatrix}
\v X^{(1)}\\
\v X^{(2)}\\
\vdots\\
\v X^{(t_{\max})}\\
\end{pmatrix} =\begin{pmatrix}
\v W^{(1)} & \v 0 & \dots & \v 0 \\
\v 0 & \v W^{(2)} & \dots & \v 0 \\
\vdots &  \vdots & \ddots & \vdots \\
\v 0 & \v 0 & \dots & \v W^{(t_{\max})} \\
\end{pmatrix} \begin{pmatrix}
\v X^{(\frac{1}{2})}\\
\v X^{(1\frac{1}{2})}\\
\vdots\\
\v X^{(t_{\max} -\frac{1}{2})}\\
\end{pmatrix}.
\end{align}

By extracting the rows corresponding to $\mathcal{V}_c$ from \eqref{eq.wd_vec}, we isolate the system from the adversaries' perspective:
\begin{align*}
\begin{pmatrix}
\v X_c^{(1)}\\
\v X_c^{(2)}\\
\vdots\\
\v X_c^{(t_{\max})}\\
\end{pmatrix} =\begin{pmatrix}
(\v W_{cc}^{(1)}~~\v W_{ch}^{(1)}) &\v 0&  \dots & \v 0 \\
\v 0& (\v W_{cc}^{(2)}~~\v W_{ch}^{(2)}) &  \dots & \v 0 \\
\vdots  &\vdots  & \ddots & \vdots \\
\v 0 &\v 0 & \dots & (\v W_{cc}^{(t_{\max})}~~\v W_{ch}^{(t_{\max})}) \\
\end{pmatrix} \begin{pmatrix}
\v X^{(\frac{1}{2})}\\
\v X^{(1+\frac{1}{2})}\\
\vdots\\
\v X^{(t_{\max}-\frac{1}{2})}\\
\end{pmatrix}.
\end{align*}

Given the prior knowledge of mutual weights $\v W_{cc}^{(t)}$ among corrupted nodes, we move all known terms to the left-hand side, yielding the compact matrix representation $\v Y = \v A \v B$:
\begin{align*}
\underbrace{
\begin{pmatrix}
\v X_c^{(1)}-\v W_{cc}^{(1)}\v X_c^{(\frac{1}{2})}\\
\v X_c^{(2)}-\v W_{cc}^{(2)}\v X_c^{(1+\frac{1}{2})}\\
\vdots\\
\v X_c^{(t_{\max})}-\v W_{cc}^{(t_{\max})}\v X_c^{(t_{\max}-\frac{1}{2})}\\
\end{pmatrix}
}_{\v Y}
=
\underbrace{
\begin{pmatrix}
\v W_{ch}^{(1)} &\v 0 & \dots & \v 0 \\
\v 0& \v W_{ch}^{(2)}  & \dots & \v 0 \\
\vdots  &\vdots  & \ddots & \vdots \\
\v 0  &\v 0  & \dots & \v W_{ch}^{(t_{\max})} \\
\end{pmatrix}
}_{\v A}
\underbrace{
\begin{pmatrix}
\v X_h^{(\frac{1}{2})}\\
\v X_h^{(1+\frac{1}{2})}\\
\vdots\\
\v X_h^{(t_{\max}-\frac{1}{2})}\\
\end{pmatrix},
}_{\v B}
\end{align*}
where $\v{Y}$ acts as the observation matrix composed entirely of known adversarial states and weights, $\v{A}$ is the block-diagonal mixing matrix, and $\v{B}$ represents the latent honest states. Because $\v{A}$ possesses a strictly block-diagonal structure, the global system $\v Y = \v A \v B$ decoupled completely into a series of independent round-wise sub-problems \eqref{eq.sub_problem}. As discussed in the proposition, while the hidden internal weights $\v{W}_{hc}^{(t+1)}$ and $\v{W}_{hh}^{(t+1)}$ prevent the explicit derivation of individual updates $\Delta \v{x}_i^{(t)}$ for $t \geq 1$, recovering the intermediate states $\v{X}_h^{(t+\frac{1}{2})}$ still constitutes a severe breach, as they encapsulate the cumulative private data of the honest nodes. 
\end{proof}

\section{Proof of Theorem~\ref{thm:reduction}}\label{app.reduction}
\begin{proof}

\red{We establish the reduction by constructing an integer-domain representation of the DFL observation system. 

Starting from the isolated sub-problem derived in Proposition~\ref{prop:isolation}, we have $\v{Y} = \v{W}_{ch} \v{X}_h$. 
First, we address the rational nature of the mixing weights. By multiplying both sides by the common denominator $\beta$, we obtain:
\begin{align*} 
    \beta \v{Y} = (\beta \v{W}_{ch}) \v{X}_h = \tilde{\v{W}} \v{X}_h,
\end{align*}
where $\tilde{\v{W}} \in \mathbb{Z}^{M \times N}$ is now an integer matrix. 


Second, under the exact integer-domain representation, define
\begin{equation*}
    \tilde{\v{X}}_h
    =
    10^\gamma \v{X}_h
    \in
    \mathbb{Z}^{N\times u}.
\end{equation*}
Multiplying both sides of
$\beta\v{Y}=\tilde{\v{W}}\v{X}_h$
by $10^\gamma$ yields
\begin{equation}
\label{eq.proof_quantize}
    \tilde{\v{Y}}
    =
    10^\gamma\beta\v{Y}
    =
    \tilde{\v{W}}\tilde{\v{X}}_h
    \in
    \mathbb{Z}^{M\times u}.
\end{equation}

Third, we reduce this integer relation entry-wise modulo a prime $Q$, yielding
\begin{align} 
\label{eq.proof_modulo}
    \tilde{\v{Y}} \equiv \tilde{\v{W}} \tilde{\v{X}}_h \pmod Q.
\end{align}
%
Here, $Q$ is a large prime which satisfy the requirements of the corresponding lattice attack.
\eqref{eq.proof_modulo} has exactly the matrix structure of the mHSSP/mHLCP, which the columns of $\tilde{\v{W}}$ constitute the hidden coefficient vectors, while the rows of $\tilde{\v{X}}_h$ correspond to the multidimensional secrets.

Finally, \eqref{eq.proof_modulo} aligns identically with the formal definition of multi-dimensional lattice challenges:
\begin{itemize}
    \item If the DFL protocol employs uniform mixing weights (e.g., Secure Summation where $W_{ij} \in \{0, 1\}$), the scaled coefficients lie within $\mathcal{C} \subseteq \{0, 1\}$. \eqref{eq.proof_modulo} is thus the exact formulation of an \textbf{mHSSP} instance, where the adversary seeks an unknown binary selection matrix.
    \item If the protocol employs heterogeneous mixing weights (e.g., weighted averaging), the scaled integer weights are constrained within $\mathcal{C} \subseteq \{0, \dots, c\}$. In this case, \eqref{eq.proof_modulo} corresponds perfectly to an \textbf{mHLCP} instance, generalizing the subset-sum structure to arbitrary bounded integer coefficients.
\end{itemize}
Since recovering $\v{X}_h$ requires first identifying the hidden integer matrix $\tilde{\v{W}}$, the coefficient-recovery stage of the privacy reconstruction problem can be reduced to solving the corresponding mHSSP/mHLCP instance. Once $\tilde{\v{W}}$ is recovered, the original weight matrix is obtained as $\v{W}_{ch}=\beta^{-1}\tilde{\v{W}}$, after which the honest states can be recovered from the original linear system $\v{Y}=\v{W}_{ch}\v{X}_h$ whenever $\v{W}_{ch}$ admits unique inversion. This completes the proof.

}
\end{proof}

\section{Other HSSP Attacks}\label{app.cg_attack}
Coron and Gini~\cite{coron2020polynomial,coron2021provably} introduced two alternative methods that reduce the complexity of the second phase from exponential to polynomial time, effectively shifting the overall computational bottleneck to the $O(N^9)$ and $O(N^7(N + \log c)^2)$ complexity of the first phase. 

The first method is a \textit{multivariate approach}, which reformulates the recovery of binary vectors into a system of multivariate quadratic equations. The second method is a \textit{statistical approach}, which leverages the geometric structure of parallelepipeds and the Nguyen-Regev algorithm~\cite{nguyen2006learning} to infer the hidden binary matrix through distributional analysis.

Since the specific choice of the attack algorithm does not alter the fundamental principles of our proposed privacy analysis, we refer the reader to \cite{coron2020polynomial,coron2021provably} for further technical details.

\section{Case Study}\label{ssec.similar_works}
To the best of our knowledge, relatively few works study privacy in DFL from a reconstruction perspective. Interestingly, several recent studies can be viewed as special cases of our framework under more restrictive assumptions. This connection situates our work within the existing literature and highlights the generality of our approach.

\subsection{Case Study: The Reconstruction Attack by Mrini et al. \cite{mrini2024privacy}}

The reconstruction attack proposed in \cite{mrini2024privacy} represents a specific instance of our framework where no SA is employed, and $\v{W}$ is assumed to be time-invariant and fully known. The core strategy of this attack is to exploit temporal correlations to jointly utilize observations across multiple iterations. 

By recursively expanding the aggregation and update rules in \eqref{eq.local_update} and \eqref{eq.aggr}, the model state of corrupted nodes at iteration $t+1$ can be expressed as
\begin{align*}
    \v X_c^{(t+1)}
    &=\v W_{cc}\v X_c^{(t+\frac{1}{2})}+\v W_{ch}\v X_h^{(t+\frac{1}{2})}\\
    &=\v W_{cc}\v X_c^{(t+\frac{1}{2})}+\v W_{ch}(\v X_h^{(t)}+\Delta \v X_h^{(t)})\\
    &=\v W_{cc}\v X_c^{(t+\frac{1}{2})}+\v W_{ch}(\v W_{hc}\v X_c^{(t-\frac{1}{2})}+\v W_{hh}\v X_h^{(t-\frac{1}{2})})+\v W_{ch}\Delta \v X_h^{(t)}\\
    &\cdots\\
    &=\v W_{cc} \v X_c^{(t+\frac{1}{2})}+\v W_{ch}\sum_{i=0}^{t-1}\v W_{hh}^i\v W_{hc}\v X_c^{(t-1-i+\frac{1}{2})}\\
    &+\v W_{ch}\sum_{i=0}^{t}(\v W_{hh}^i\Delta\v X_h^{(t-i)})+\v W_{ch}\v W_{hh}^t \v X_h^{(0)}.
\end{align*}

To simplify the solution space, the authors introduce a quasi-static update assumption, where the local update (gradient) at each node $i$ is treated as a constant vector corrupted by a zero-mean random noise
\begin{align*}
    \Delta \v{x}_i^{(t)} = \Delta \v{x}_i + \v{N}^{(t)}.
\end{align*}

Under this assumption, the system can be approximated by the linear form $\v{Y} \approx \v{A}\v{B}$, where $\v{Y} = [\v{Y}^{(1)\top}, \dots, \v{Y}^{(t_{\max})\top}]^{\top}$, which $\v Y^{(t)}=\v X_c^{(t+1)}-\v W_{cc} \v X_c^{(t+\frac{1}{2})}-\v W_{ch}\sum_{i=0}^{t-1}\v W_{hh}^i\v W_{hc}\v X_c^{(t-1-i+\frac{1}{2})}-\v W_{ch}\v W_{hh}^t \v X_h^{(0)}$, $\v{A} = [\v{W}_{ch}^{\top}, \dots, (\v{W}_{ch}\sum_{i=0}^{t_{\max}-1}\v{W}_{hh}^i)^{\top}]^{\top}$ is the cumulative weight matrix, and $\v{B} = \Delta\v{X}_h$ is the target constant update. The adversaries then solve for the updates using the Moore-Penrose pseudo-inverse
\begin{align*}
    \Delta\v{X}_h \approx \v{A}^{\dagger}\v{Y}.
\end{align*}

While this approach is empirically effective in certain settings, it possesses a limitation: the assumption that local updates remain approximately constant across iterations is often violated in realistic scenarios. This mismatch inevitably introduces reconstruction noise and limits the attack's precision.

\subsection{Case Study: Vulnerability of Secure Summation in Asynchronous DFL \cite{dekker2025topology}}
Dekker et al. \cite{dekker2025topology} demonstrate that the secure summation in DFL is not inherently robust against reconstruction attacks, specifically in asynchronous settings. The vulnerability arises from the dynamic participation of nodes where, in each iteration, only a subset of active nodes participates in the summation, while inactive nodes effectively have zero updates ($\Delta \v{x}_i^{(t)} = 0$). This creates opportunities for adversaries to observe stationary $\v{x}_i^{(t)}$ from the perspectives of different corrupted nodes across multiple rounds, thereby providing fixed reference points to isolate honest nodes' private states.

Furthermore, because the protocol performs a simple summation rather than a weighted average, the elements of the mixing matrix $\v{W}^{(t)}$ are binary, i.e., $W_{ij}^{(t)} \in \{0, 1\}$.  Under the assumption of full neighbor awareness (Case 1), by colluding observations from multiple corrupted nodes and exploiting $\v{x}_i^{(t)}$ that remain stationary over multiple rounds, the adversaries can formulate a system of linear equations. This allows for the direct recovery of part local updates by solving the resulting linear system.

\section{Experimental Setup}\label{ssec.setup}

\paragraph{Graph Generation.}
 \red{The main experiments use fixed-edge Erdős-Rényi graphs generated with NetworkX~\cite{networkx}. We construct connected undirected graphs and strongly connected directed graphs, each parameterized by the number of nodes $n$ and edges $e$.} A fraction $\eta$ of the total nodes are randomly designated as corrupted nodes. For undirected topologies, the weight matrix $\v{W}$ is designed to be doubly stochastic, while for directed topologies, it is column-stochastic. To map the weights into the integer domain, we scale the matrix by a factor $\beta$, yielding $\tilde{\v{W}} = \beta \v{W}$.

\paragraph{Synthetic Data}
For synthetic data evaluations, each node $i$ is assigned a model state $\v{x}_i^{(\frac{1}{2})} \in \mathbb{Z}^{u}$ with dimensionality $u=100$, where the integer elements are sampled uniformly from the range $[0,100]$. To simulate the finite field operations required for lattice attacks, we quantize all floating-point model parameters into integers and perform operations modulo $Q$, where $Q$ is a 50-bit pseudo-prime.

\paragraph{Real Dataset and Model Architectures.}
We evaluate our framework across image, tabular, and text modalities.
\begin{itemize}
    \item \textbf{Image Data:} We use the CIFAR-10 dataset~\cite{cifar}. The classification model is a Convolutional Neural Network consisting of three convolutional layers ($32$, $64$, and $128$ channels with $3\times3$ kernels) followed by two fully connected layers with $512$ and $10$ hidden units, respectively.
    \item \textbf{Tabular Data:} We use the Purchase dataset~\cite{shokri2017membership}, which consists of $600$ binary features per sample. The model is a multi-layer perceptron with two fully connected layers (256 hidden units). For visualization, the 600-dimensional tabular features are reshaped into a $24\times 25$ grid.
    \item \textbf{Text Data:} We use the Sentiment140 dataset~\cite{go2009twitter} for sentiment analysis, employing a logistic regression model for sentiment classification.
\end{itemize}


\red{
Unless otherwise stated, each node performs SGD with a batch size of $1$ in the main experiments. This setting allows us to isolate the privacy leakage introduced by SA, which constitutes the primary stage of our attack: by neutralizing the aggregation barrier, the proposed method recovers the individual gradients of honest nodes and makes them available to existing downstream reconstruction techniques. For single-layer architectures, such as logistic regression, the corresponding input can be reconstructed exactly by taking the ratio between the weight and bias gradients associated with a given label. For deep neural networks, we apply the GIA framework of~\cite{geiping2020inverting} to reconstruct raw images and tabular features from the recovered gradients. Although downstream gradient inversion is not the primary focus of this work, we further evaluate the transferability of our attack to more practical reconstruction settings. In Appendix~\ref{app.gia_batch}, we consider larger batchsizes and invoke several stronger GIA implementations through the \textit{Breaching}\cite{geiping2022breaching} framework. These experiments examine whether the isolated client updates recovered by our method remain exploitable under different downstream inversion algorithms. For text-based tasks, the recovered embeddings can additionally be mapped back to natural-language sequences using \textit{vec2text}~\cite{morris2023text}, providing another downstream instantiation of data reconstruction. Overall, our methodology circumvents the protection provided by SA against direct access to individual client updates, thereby enabling existing gradient- and embedding-based reconstruction techniques to be applied in decentralized learning settings.

}

\section{The Scale of the Core Sub-graphs}\label{exp.large_graph}

As illustrated in Figure~\ref{fig:large_graph}, even when the original graph is large (e.g., consisting of 100 or 200 nodes), its inherent sparsity ensures that the topological pruning process effectively decomposes the global system into multiple independent and manageable core subgraphs. The sizes of these resulting subgraphs are predominantly concentrated within the range of 0 to 10. At this scale, HSSP-based attacks remain exceptionally efficient. Consequently, we focus our experimental evaluations on core subgraphs with a representative size of $n=10$.

\begin{figure}[ht]
    \centering
    \begin{minipage}{0.24\textwidth}
        \centering
        \includegraphics[width=\linewidth]{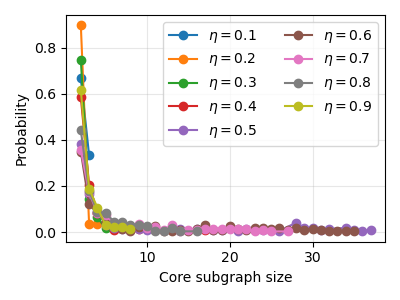}
        \\[-1ex]{\small (a) $n=100, e=150$}
    \end{minipage}
    \begin{minipage}{0.24\textwidth}
        \centering
        \includegraphics[width=\linewidth]{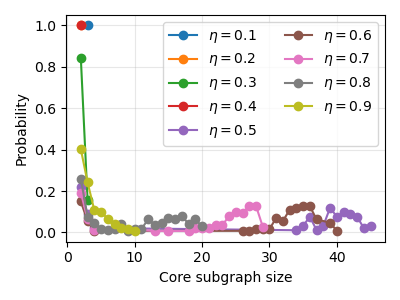}
        \\[-1ex]{\small (b) $n=100, e=200$}
    \end{minipage}
    \begin{minipage}{0.24\textwidth}
        \centering
        \includegraphics[width=\linewidth]{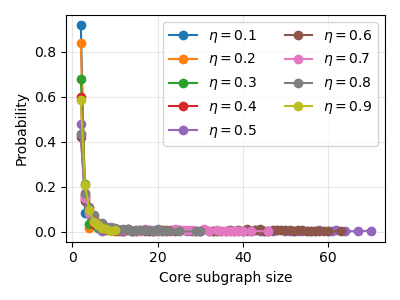}
        \\[-1ex]{\small (c) $n=200, e=300$}
    \end{minipage}
    \begin{minipage}{0.24\textwidth}
        \centering
        \includegraphics[width=\linewidth]{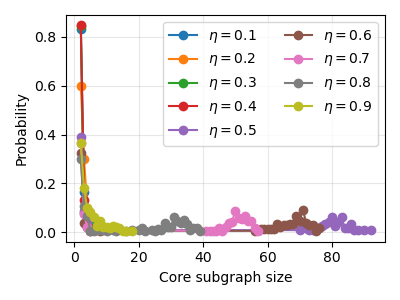}
        \\[-1ex]{\small (d) $n=200, e=400$}
    \end{minipage}
    \caption{Proportional distribution of varying core subgraph sizes following simplification in large-scale undirected networks.}
    \label{fig:large_graph}
\end{figure}

\red{
We additionally considered different connected graph families to examine the dependence on network topology: ER graphs, regular ring lattices, small-world graphs with rewiring probability \(p=0.1\), and scale-free graphs. As shown in Figure~\ref{fig:core_size_distribution}, the core-subgraph sizes remain predominantly concentrated between $0$ and $10$ across all considered graph families. This indicates that the tendency of the simplification procedure to reduce a large network to a relatively small core is not specific to the ER topology used in our main experiments. Nevertheless, the detailed distributions vary across graph families. The ring and small-world curves are smoother and place more probability mass on smaller core sizes than the ER and scale-free curves. This suggests that their more regular local connectivity allows the simplification procedure to eliminate nodes more consistently.

\begin{figure}
    \centering
    \includegraphics[width=0.95\linewidth]{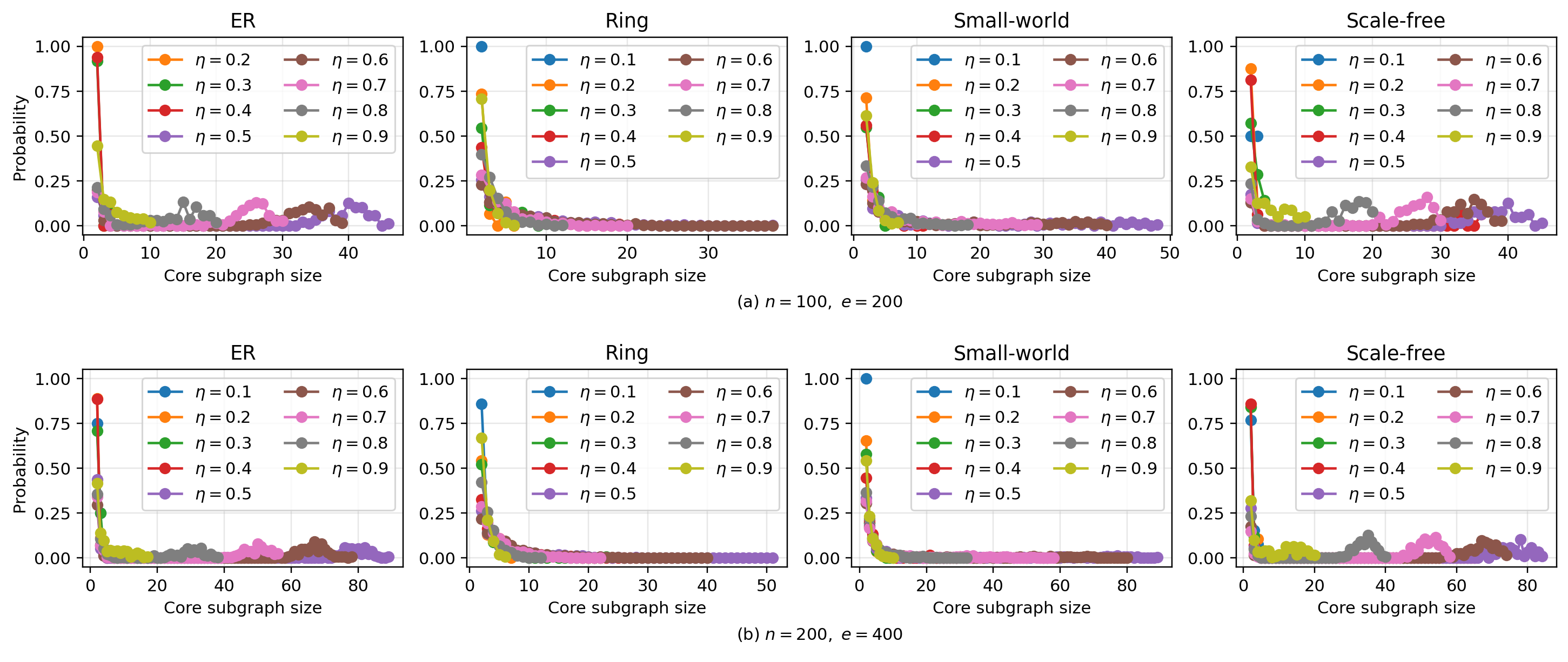}
    \caption{\red{Proportional distribution of varying core subgraph sizes following simplification in different graphs.}}
    \label{fig:core_size_distribution}
\end{figure}
}

\section{Attack Performance on Directed Graph}\label{exp.directed}

Table~\ref{tab.directed_hlcp_7} presents the experimental results for directed topologies. An observation is that the absence of the row-stochastic property makes directed graphs significantly more challenging for the adversary. Without this global summation information, it becomes difficult to distinguish between valid and invalid vector combinations, leading to a much larger set of feasible solutions compared to the undirected case.
However, this ambiguity can be mitigated by utilizing the auxiliary scalars $a_i^{(0)}$ as a secondary filter. Once the candidate weight matrices are reconstructed and the corresponding model updates are derived, the adversary can validate these results against the known evolution of such scalars. This step effectively prunes a vast majority of combinations, thereby substantially increasing the probability of identifying the unique GT solution.

\begin{table*}[ht]
\centering
\begin{tabular}{c c c c c c c c}
\hline
  & All Weights Found & Found Vec. & $|\mathcal{F}|$ (Case 1)  & $|\mathcal{F}|$ (Case 1 after validated by $a_i^{(0)}$) & Step1 Time (s) & Step2 Time (s)\\
\hline
1 & TRUE & 1728 & 161051 & 77  & 0.168152 & 0.299058 \\
2 & TRUE & 650 & 1331 & 1 & 0.001548 & 0.054637 \\
3 & TRUE & 936 &  13924 & 8  & 0.001124 & 0.094580 \\
4 & TRUE & 91  & 125 & 1  & 0.001270 & 0.004141 \\
5 & TRUE & 75  & 256 & 4  & 0.000873 & 0.002860 \\
\hline
\end{tabular}
\caption{Performance of the NS attack in resolving the mHLCP ($n=10, e=30, \eta=0.7$) in directed graph.}
\label{tab.directed_hlcp_7}
\end{table*}

\red{
\section{Transferability to different GIAs}\label{app.gia_batch}

We further examine whether the individual client updates recovered by our attack remain exploitable under different GIA methods and larger batchsizes. To this end, we use the unified API provided by the \textit{Breaching} framework~\cite{geiping2022breaching} to evaluate several representative GIA implementations, including \cite{zhu2019deep,geiping2020inverting,yin2021see}. This evaluation is intended to assess the transferability of our structural recovery attack to different downstream reconstruction methods, rather than to provide a comprehensive comparison among GIAs. We consider batchsizes $|\mathcal{B}_i|\in\{1,2,4,8\}$. For each setting, we compare downstream reconstruction from two inputs: the unprotected ground-truth client update $\v g$ and the corresponding update $\hat{\v g}$ recovered by circumventing SA. We restrict the evaluation to trials in which the HSSP-based attack successfully recovers the correct client update, thereby isolating the effect of replacing $\v g$ with its reconstructed counterpart $\hat{\v g}$. In addition to PSNR and SSIM, we report the learned perceptual image patch similarity (LPIPS) metric~\cite{zhang2018unreasonable} with the AlexNet backbone, for which a lower value indicates greater perceptual similarity between the reconstructed and ground-truth images.

\begin{table*}[t]
\centering
\resizebox{\textwidth}{!}{%
\begin{tabular}{rcccccc}
\toprule
& \multicolumn{2}{c}{Zhu et al.~\cite{zhu2019deep}} &
\multicolumn{2}{c}{Geiping et al.~\cite{geiping2020inverting}} &
\multicolumn{2}{c}{Yin et al.~\cite{yin2021see}} \\
\cmidrule(lr){2-3}\cmidrule(lr){4-5}\cmidrule(lr){6-7}
$|\mathcal{B}_i|$ & True & Recovered & True & Recovered & True & Recovered \\
\midrule
1 & 0.3826 / 0.0852 / 16.76 & 0.3746 / 0.0910 / 16.62 & 0.4712 / 0.1361 / 18.09 & 0.4680 / 0.1326 / 18.11 & 0.3699 / 0.0983 / 16.64 & 0.3750 / 0.1013 / 16.63 \\
2 & 0.3983 / 0.0721 / 17.67 & 0.4149 / 0.0694 / 17.89 & 0.4961 / 0.1165 / 19.53 & 0.5000 / 0.1170 / 19.57 & 0.4721 / 0.0589 / 18.91 & 0.4998 / 0.0545 / 19.27 \\
4 & 0.4165 / 0.0636 / 17.50 & 0.3832 / 0.0745 / 17.07 & 0.4640 / 0.1284 / 18.25 & 0.4651 / 0.1272 / 18.27 & 0.5146 / 0.0625 / 18.87 & 0.5058 / 0.0612 / 18.74 \\
8 & 0.3206 / 0.1001 / 15.99 & 0.3064 / 0.0978 / 15.64 & 0.3639 / 0.1634 / 16.35 & 0.3740 / 0.1633 / 16.52 & 0.4366 / 0.0946 / 17.49 & 0.4350 / 0.0928 / 17.69 \\
\bottomrule
\end{tabular}}
\caption{\red{Paired downstream GIA performance using the ground-truth gradient $\v g$ and the corresponding gradient $\hat{\v g}$ recovered by the HSSP-based attack. Each entry is mean SSIM / LPIPS / PSNR (dB) over the training images from honest nodes.}}
\label{tab:gia_transferability}
\end{table*}

Table~\ref{tab:gia_transferability} shows that the reconstructions obtained from the recovered updates closely match those obtained directly from the ground-truth updates across the evaluated GIA methods and batch sizes. In particular, the HSSP-based recovery stage introduces little additional degradation into the downstream reconstruction process. These results indicate that downstream reconstruction performance is primarily determined by the capability of the selected GIA and the batchsize, rather than by inaccuracies introduced during update recovery. Stronger GIAs may improve the absolute reconstruction quality, but do not alter the central conclusion of this work: once the protection provided by SA is circumvented and individual client updates are exposed, existing gradient-inversion techniques can be directly applied in decentralized learning settings.
}

\section{Extended Numerical Results}\label{app.table}

\begin{table*}[t]
\centering
\begin{tabular}{c c c c c c c c}
\hline
 &  All Weights Found & Found Vec. & $|\mathcal{F}|$ (Case 1) & $|\mathcal{F}|$ (Case 2) & $|\mathcal{F}|$ (Case 3) & Step1 Time (s)\footnotemark & Step2 Time (s)\\
\hline
1  & TRUE & 1152    & 30 & 571 & 1455876 & 0.103096 & 0.172203 \\
2  & TRUE & 544    & 20 & 40 & 956736  & 0.000942 & 0.044706 \\
3  & TRUE & 776     & 100 & 750 & 11588400 & 0.001284 & 0.079257 \\
4  & TRUE & 104     & 3  & 28 & 22392   & 0.048423 & 0.009377 \\
5  & TRUE & 360     & 3  & 12 & 977076  & 0.001345 & 0.023219 \\
6  & TRUE & 608     & 5  & 10 & 260304  & 0.000922 & 0.055605 \\
7  & TRUE & 640     & 15 & 60 & 2774016 & 0.000874 & 0.058244 \\
8  & TRUE & 506     & 1 & 4 & 153876  & 0.000877 & 0.039568 \\
9  & FALSE & 0      & -   & -       & -        & - \\
10 & TRUE & 352     & 28 &  168 & 475440  & 0.000893 & 0.022448 \\
\hline
\end{tabular}
\caption{Performance of the NS attack in resolving the mHLCP ($n=10, e=20, \eta=0.6$) in undirected graphs. The metrics include success status, number of recovered candidate vectors, remaining vector combinations after topological filtering (Cases 1 - 3), and execution times for the two attack phases in NS attack.}
\label{tab.hlcp_6}
\end{table*}
\footnotetext{All timings are measured on a personal computer.}

\begin{figure}[ht]
    \centering
    \begin{minipage}{0.23\textwidth}
        \centering
        \includegraphics[width=\linewidth]{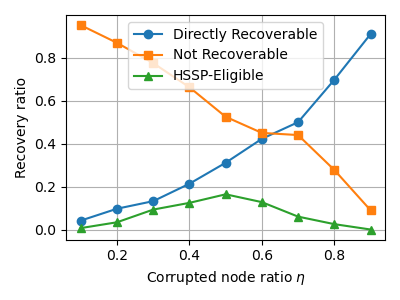}
        \\[-1ex]{\small (a) $n=20, e=40$}
    \end{minipage}
    \begin{minipage}{0.23\textwidth}
        \centering
        \includegraphics[width=\linewidth]{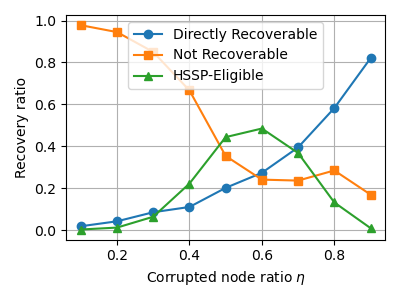}
        \\[-1ex]{\small (b) $n=20, e=60$}
    \end{minipage}
    \begin{minipage}{0.23\textwidth}
        \centering
        \includegraphics[width=\linewidth]{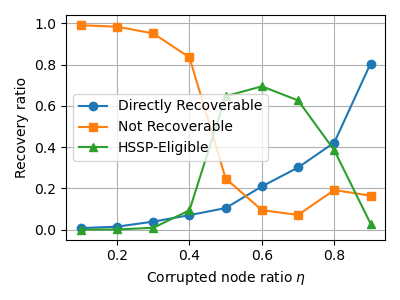}
        \\[-1ex]{\small (c) $n=20, e=80$}
    \end{minipage}
    \begin{minipage}{0.23\textwidth}
        \centering
        \includegraphics[width=\linewidth]{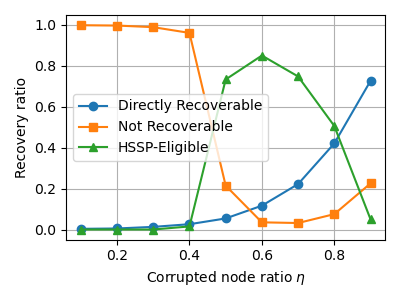}
        \\[-1ex]{\small (d) $n=20, e=100$}
    \end{minipage}
    \caption{Relationship between the ratio of corrupted nodes and the ratio of honest nodes categorized by recoverability in directed graphs.
    }
    \label{fig:directed_recovery}
\end{figure}


\begin{table*}[!ht]
\centering
\begin{tabular}{c c c c c c c c}
\hline
 &  All Weights Found & Found Vec. & $|\mathcal{F}|$ (Case 1) & $|\mathcal{F}|$ (Case 2) & $|\mathcal{F}|$ (Case 3) & Step1 Time (s) & Step2 Time (s)\\
\hline
1  & TRUE  & 47  & 2 & 4  & 204  & 0.133022 & 0.002395 \\
2  & TRUE  & 15  & 1 & 1  & 33   & 0.009878 & 0.000770 \\
3  & TRUE  & 14  & 1 & 3  & 27   & 0.000837 & 0.000625 \\
4  & TRUE  & 15  & 1 & 3  & 36   & 0.000772 & 0.000670 \\
5  & TRUE  & 61  & 1 & 2  & 456  & 0.003051 & 0.002427 \\
6  & TRUE  & 88  & 1 & 2  & 270  & 0.000936 & 0.003373 \\
7  & TRUE  & 196 & 9 & 26 & 3480 & 0.001093 & 0.009402 \\
8  & TRUE  & 35  & 1 & 2  & 129  & 0.000945 & 0.001764 \\
9  & FALSE & 0   & - & -  & -    & -        & -        \\
10 & TRUE  & 44  & 1 & 3  & 174  & 0.007832 & 0.001907 \\
\hline
\end{tabular}
\caption{Performance of the NS attack in resolving the mHLCP ($n=10, e=20, \eta=0.7$) in undirected graphs.}
\label{tab.hlcp_7}
\end{table*}

\begin{table}[!ht]
\centering
\begin{tabular}{l c c }
\hline
Config & Recall (\%) & Avg. Found Vec. \\
\hline
$n=10, e=20, \eta=0.6$& 90.0\% & 11.6  \\
$n=10, e=20, \eta=0.7$ & 96.0\% & 5.3 \\
$n=10, e=20, \eta=0.8$  & 100.0\% & 3.2  \\
$n=20, e=40, \eta=0.6$ & 79.0\% & 111.8 \\
$n=20, e=40, \eta=0.7$ & 98.0\% & 23.6  \\
$n=20, e=40, \eta=0.8$ & 100.0\% & 8.1 \\
$n=30, e=60, \eta=0.6$ & 79.0\% & 1295.9 \\
$n=30, e=60, \eta=0.7$ & 93.0\% & 117.0  \\
$n=30, e=60, \eta=0.8$ & 100.0\% & 20.2  \\
$n=40, e=80, \eta=0.6$& 44.0\% & 9183.0  \\
$n=40, e=80, \eta=0.7$& 75.0\% & 489.1  \\
$n=40, e=80, \eta=0.8$& 97.0\% & 56.0  \\
\hline
\end{tabular}
\caption{mHSSP attack performance with different configurations in undirected graphs.}
\label{tab.large_hssp_summary}
\end{table}

\begin{table*}[!ht]
\centering
\begin{tabular}{c c c c c c c c}
\hline
 &  All Weights Found & Found Vec. & $|\mathcal{F}|$ (Case 1) & $|\mathcal{F}|$ (Case 2) & $|\mathcal{F}|$ (Case 3) & Step1 Time (s) & Step2 Time (s)\\
\hline
1  & TRUE & 16 & - & 2  & 120 & 0.126755 & 0.001178 \\
2  & TRUE & 6  & - & 2  & 6   & 0.000871 & 0.000360 \\
3  & TRUE & 12 & -& 3  & 60  & 0.000684 & 0.000648 \\
4  & TRUE & 12 & - & 30 & 72  & 0.000608 & 0.000525 \\
5  & TRUE & 9  & - & 16 & 30  & 0.000549 & 0.000447 \\
6  & TRUE & 12 & - & 2  & 24  & 0.000623 & 0.000504 \\
7  & TRUE & 12 & - & 2  & 12  & 0.000537 & 0.000503 \\
8  & TRUE & 10 & - & 8  & 42  & 0.000642 & 0.000498 \\
9  & TRUE & 12 & - & 6  & 96  & 0.001185 & 0.000557 \\
10 & TRUE & 5  & - & 2  & 6   & 0.000644 & 0.000302 \\
\hline
\end{tabular}
\caption{Performance of the NS attack in resolving the mHSSP ($n=10, e=20, \eta=0.6$) in undirected graphs.}
\label{tab.hssp_6}
\end{table*}

\begin{table*}[!ht]
\centering
\begin{tabular}{c c c c c c c c}
\hline
 &  All Weights Found & Found Vec. & $|\mathcal{F}|$ (Case 1) & $|\mathcal{F}|$ (Case 2) & $|\mathcal{F}|$ (Case 3) & Step1 Time (s) & Step2 Time (s)\\
\hline
1  & TRUE  & 6 & - & 2 & 6 & 0.111992 & 0.000508 \\
2  & TRUE  & 4 & - & 2 & 3 & 0.005444 & 0.000416 \\
3  & TRUE  & 4 & - & 2 & 3 & 0.000890 & 0.000266 \\
4  & FALSE & 0 & - & - & - & -        & -        \\
5  & TRUE  & 5 & - & 2 & 3 & 0.000849 & 0.000312 \\
6  & TRUE  & 6 & - & 2 & 3 & 0.000995 & 0.000272 \\
7  & TRUE  & 4 & - & 1 & 3 & 0.000704 & 0.000307 \\
8  & TRUE  & 6 & - & 2 & 3 & 0.000780 & 0.000443 \\
9  & TRUE  & 4 & - & 1 & 3 & 0.000918 & 0.000276 \\
10 & TRUE  & 4 & - & 1 & 3 & 0.000622 & 0.000219 \\
\hline
\end{tabular}
\caption{Performance of the NS attack in resolving the mHSSP ($n=10, e=20, \eta=0.7$) in undirected graph.}
\label{tab.hssp_7}
\end{table*}

\begin{figure}[h]
  \centering
  \includegraphics[width=0.4\linewidth]{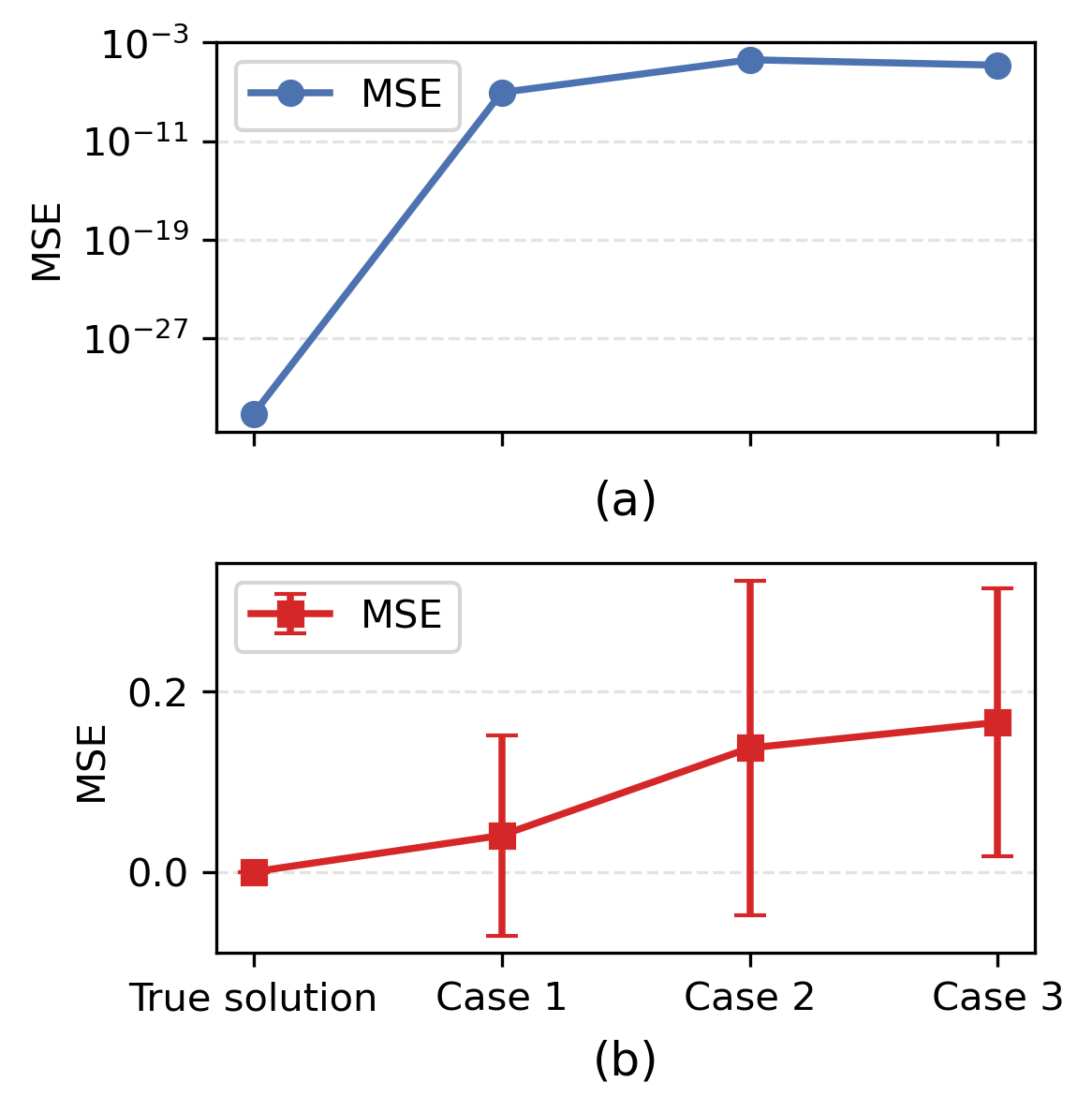}
  \caption{Statistical reconstruction performance on the Purchase dataset. (a) MSE between the HSSP-reconstructed gradients and the GT gradients. (b) Distribution of MSE scores for training data (binary) reconstructed from the sampled candidate gradients.}
  \Description{}
  \label{fig:stat_purchase}
\end{figure}

\begin{figure}[h]
  \centering
  \includegraphics[width=0.4\linewidth]{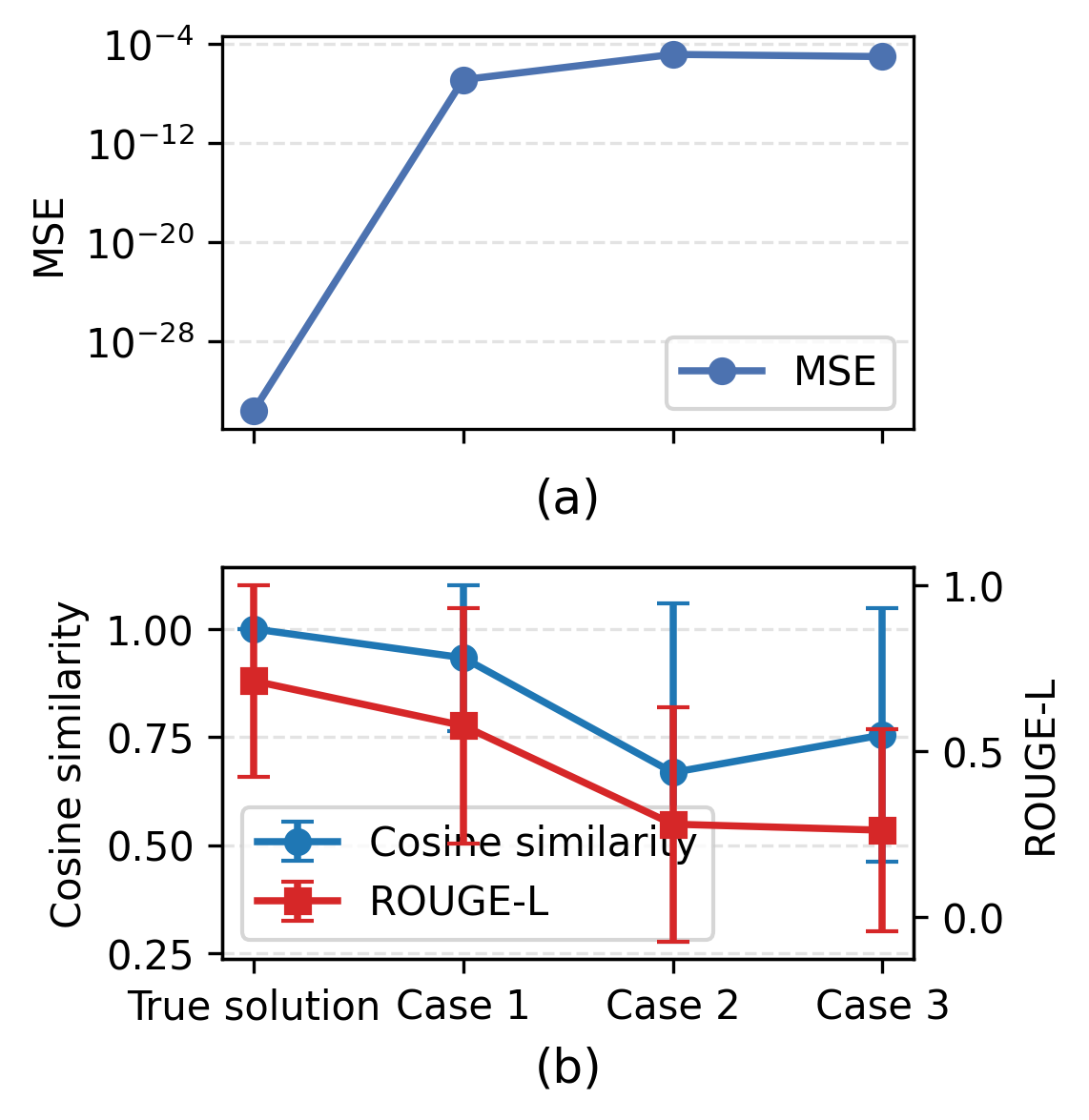}
  \caption{Statistical reconstruction performance on the Sentiment140 dataset. (a) MSE between the HSSP-reconstructed gradients and the GT gradients. (b) Distribution of cosine similarity and Rouge-L scores for training data reconstructed from the sampled candidate gradients.}
  \Description{}
  \label{fig:stat_text}
\end{figure}

\begin{figure}[h]
  \centering
  \includegraphics[width=0.6\linewidth]{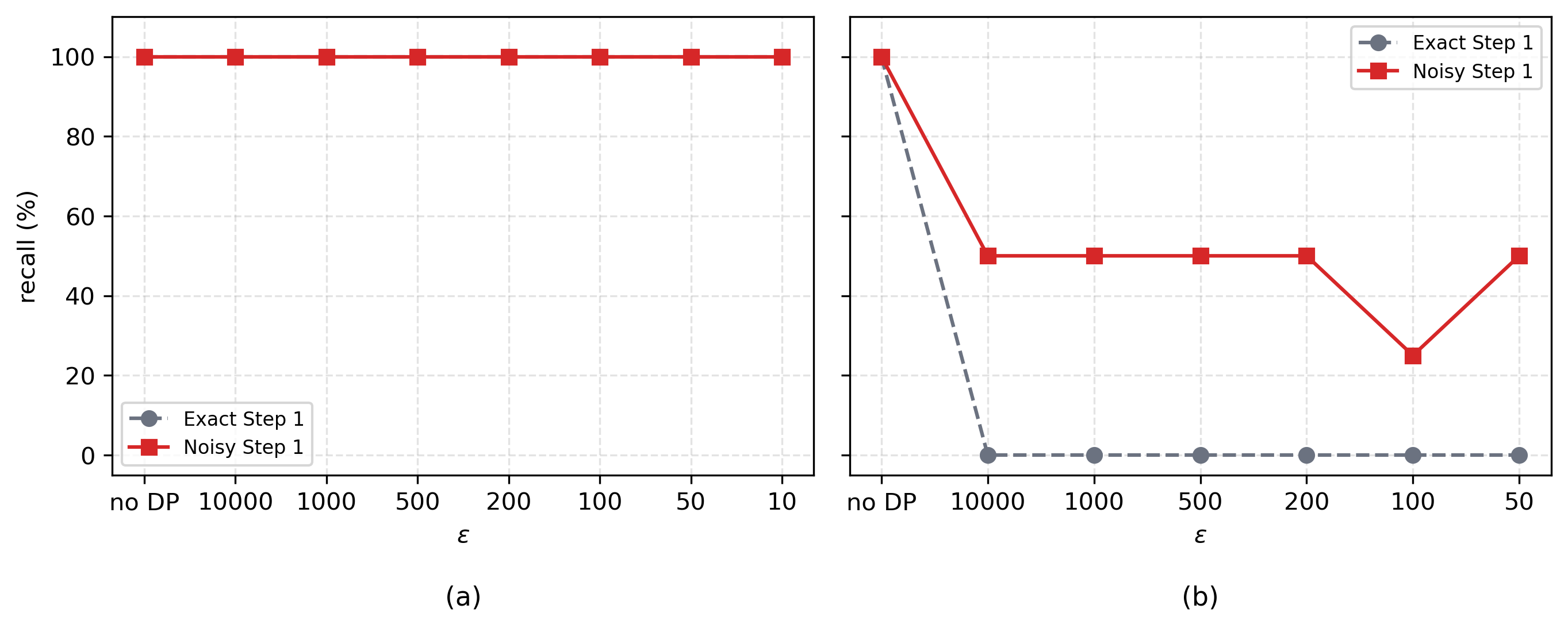}
  \caption{\red{Recall by exact Step~1 and noisy Step~1. (a) LDP and (b) CDP.}}
  \Description{}
  \label{fig:aggregation_dp_noisy}
\end{figure}

\section{Complete Results of Input Reconstruction}\label{app.complete}



\begin{figure*}[!ht]
  \centering
  \includegraphics[width=0.8\linewidth]{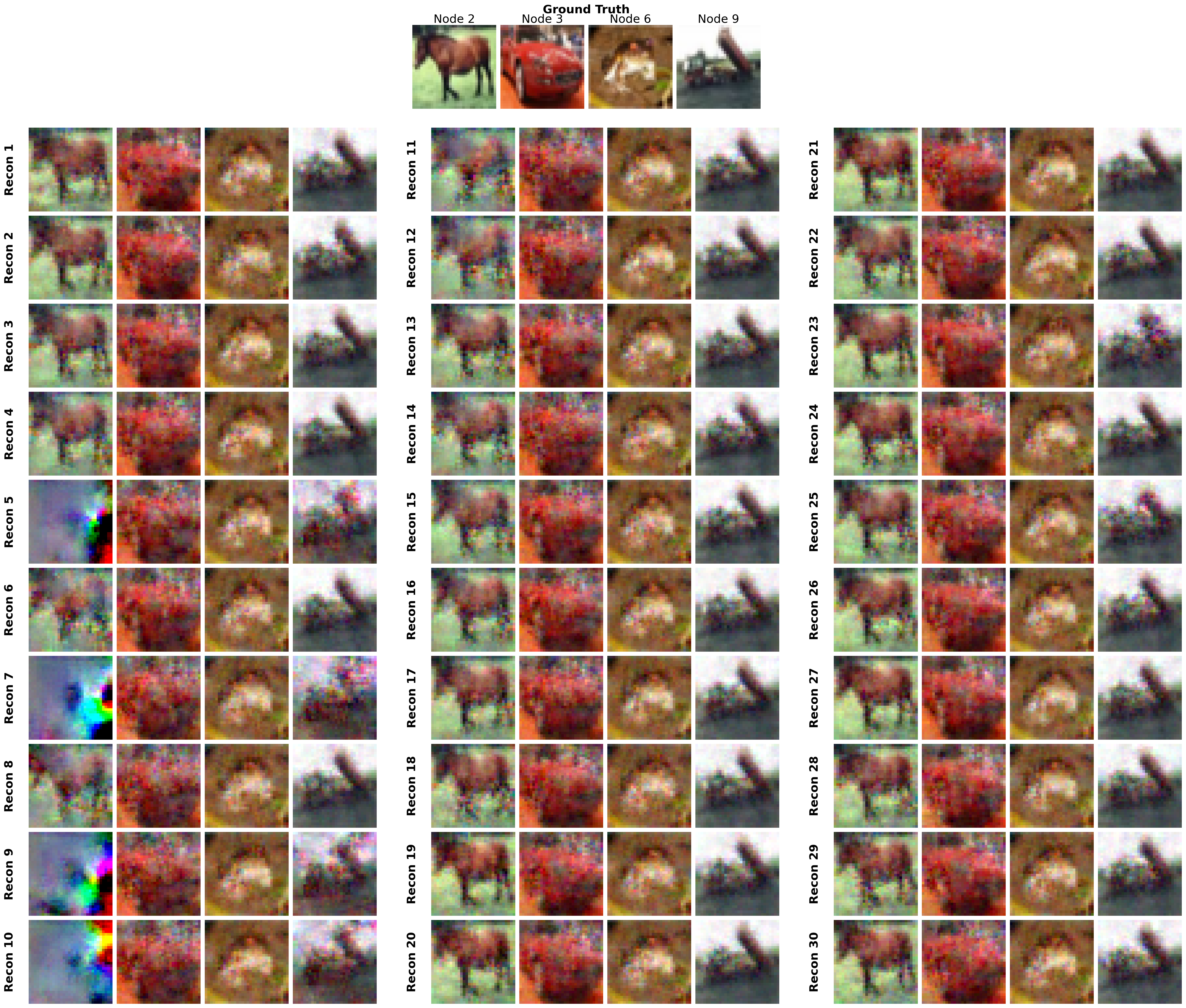}
  \caption{Reconstructed images from 30 candidate weight matrices in Case 1. The 24th entry corresponds to the true solution.}
  \Description{}
  \label{fig:img_complete_1}
\end{figure*}

\begin{figure*}[!ht]
  \centering
  \includegraphics[width=0.8\linewidth]{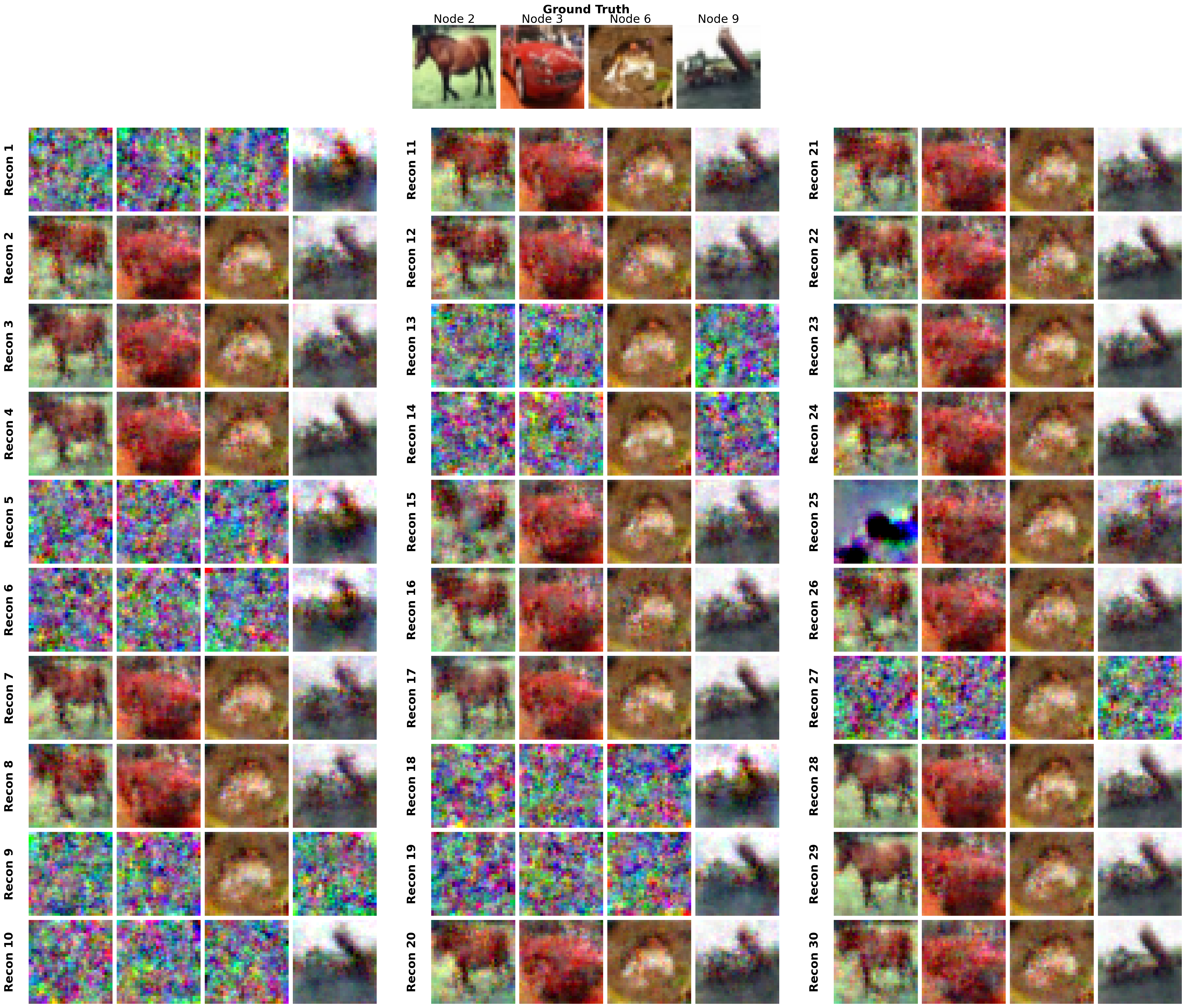}
  \caption{Reconstructed images from 30 candidate weight matrices in Case 2.}
  \Description{}
  \label{fig:img_complete_2}
\end{figure*}

\begin{figure*}[!ht]
  \centering
  \includegraphics[width=0.8\linewidth]{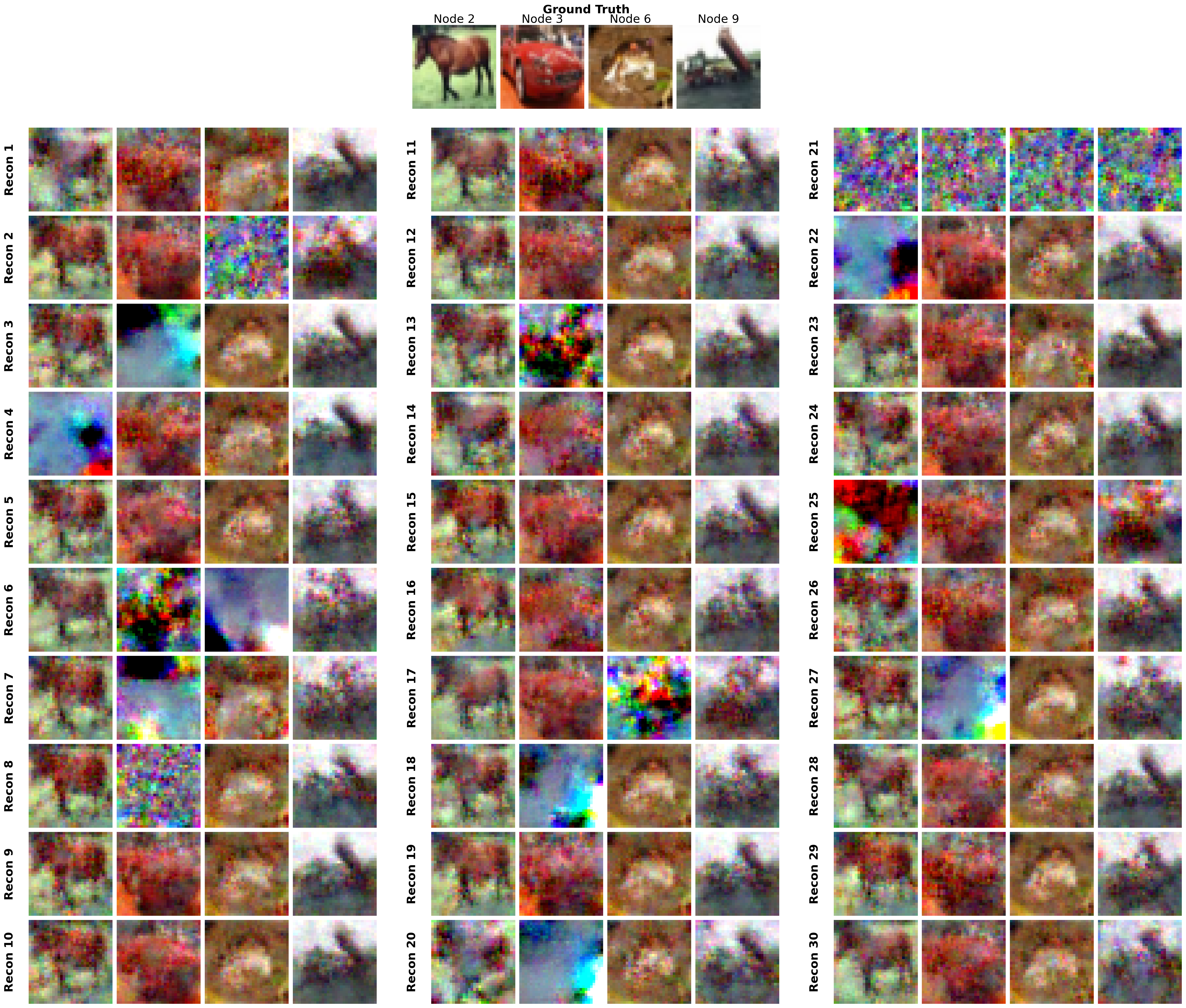}
  \caption{Reconstructed images from 30 candidate weight matrices in Case 3.}
  \Description{}
  \label{fig:img_complete_3}
\end{figure*}

\begin{figure*}[!ht]
  \centering
  \includegraphics[width=0.8\linewidth]{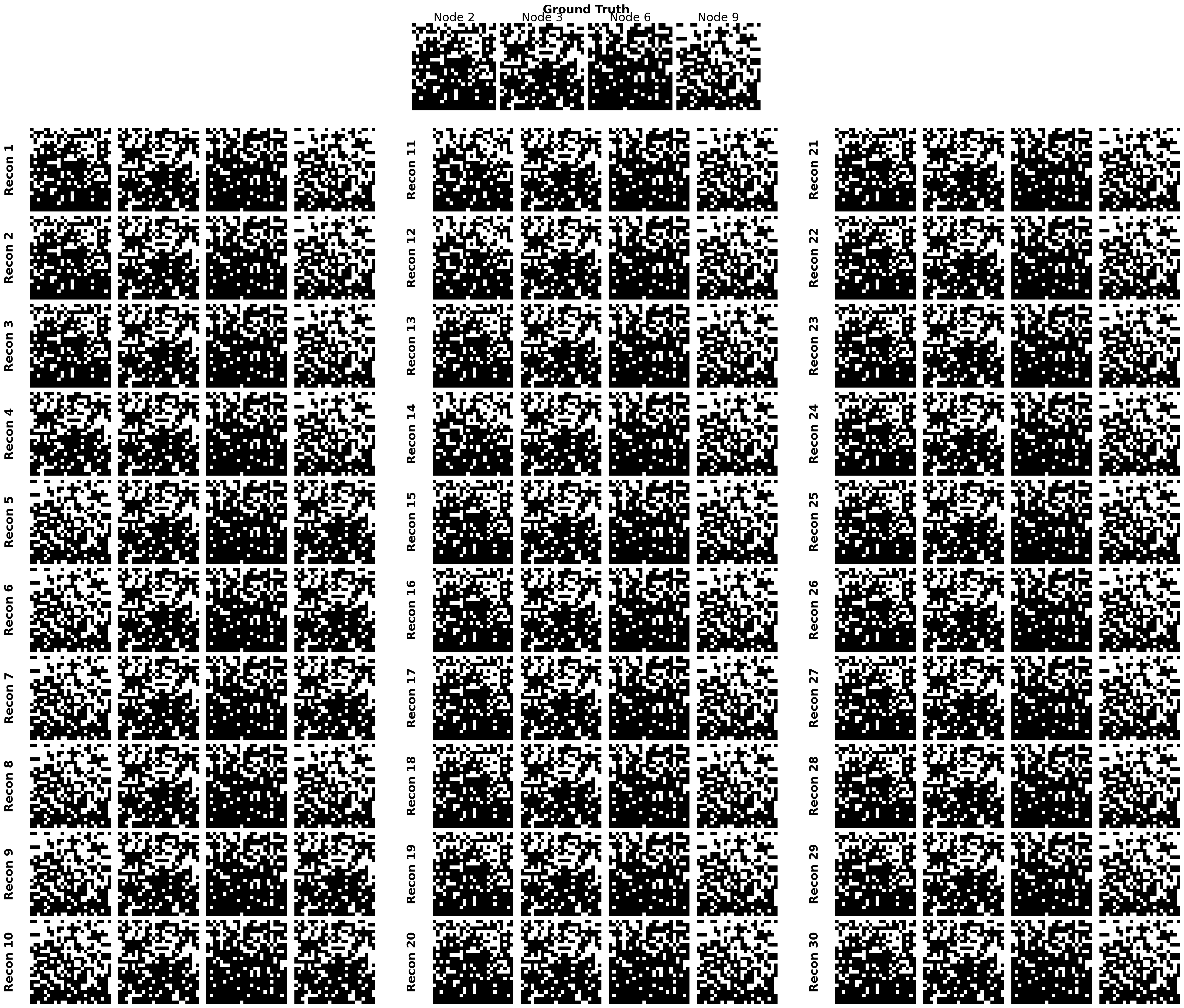}
  \caption{Reconstructed tabular data from 30 candidate weight matrices in Case 1. The 24th entry corresponds to the true solution.}
  \Description{}
  \label{fig:tabular_complete_1}
\end{figure*}

\begin{figure*}[!ht]
  \centering
  \includegraphics[width=0.8\linewidth]{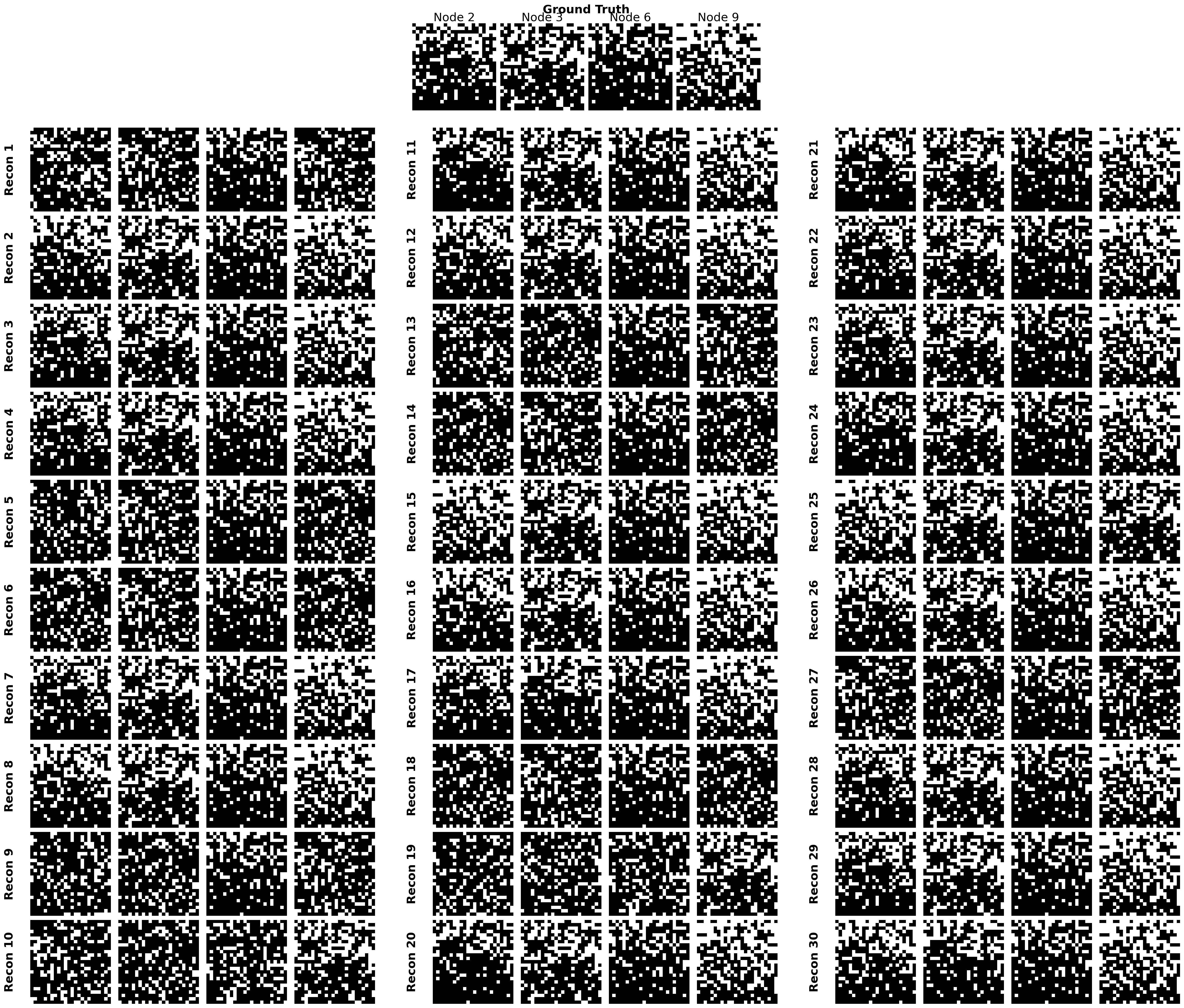}
  \caption{Reconstructed tabular data from 30 candidate weight matrices in Case 2.}
  \Description{}
  \label{fig:tabular_complete_2}
\end{figure*}

\begin{figure*}[!ht]
  \centering
  \includegraphics[width=0.8\linewidth]{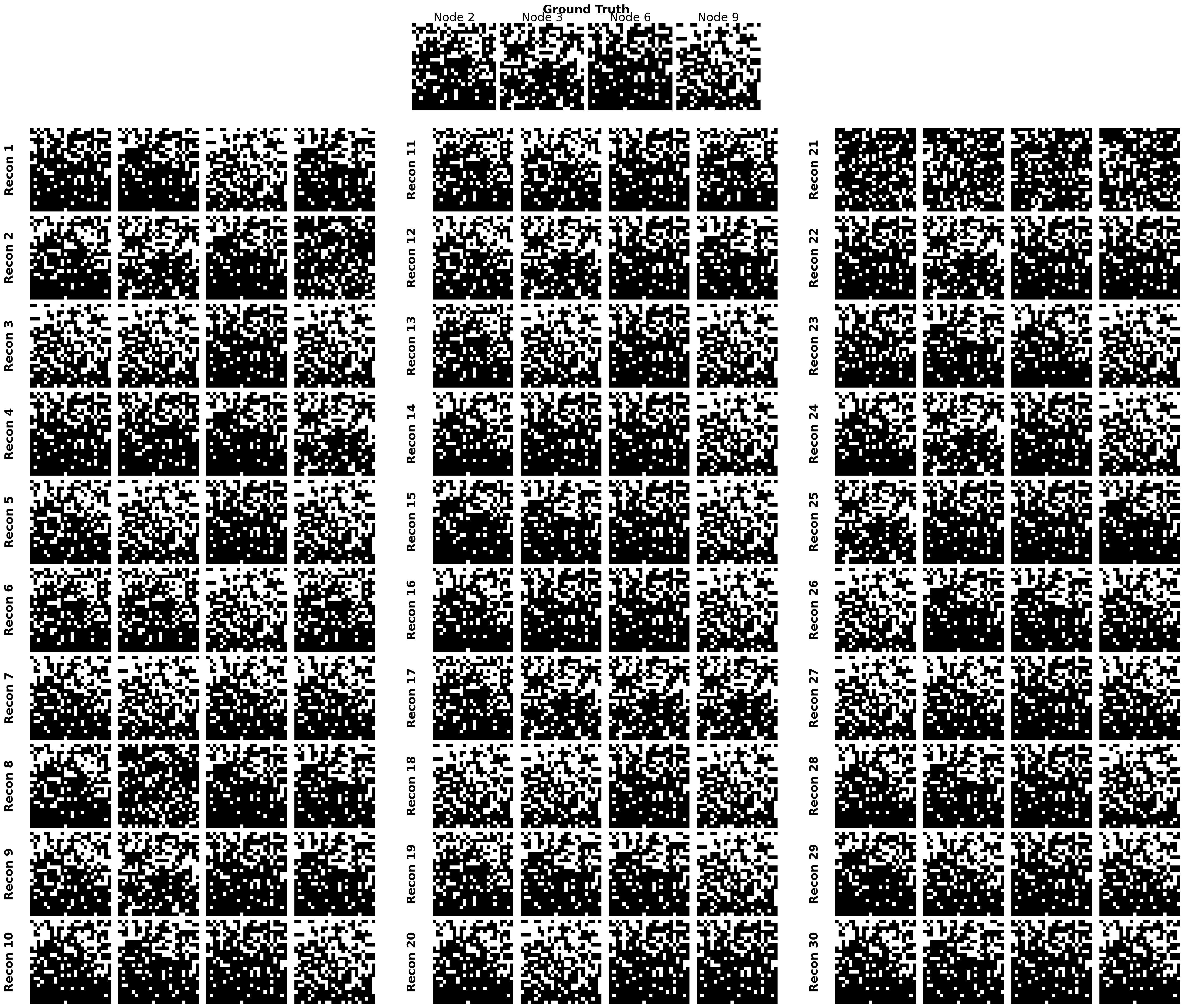}
  \caption{Reconstructed tabular data from 30 candidate weight matrices in Case 3.}
  \Description{}
  \label{fig:tabular_complete_3}
\end{figure*}

\onecolumn



\end{document}